\title{Transport of Charged Particles: Entropy Production and  Maximum Dissipation Principle}
\author{Chia-Yu Hsieh  \thanks{Department of Mathematics, National Taiwan University, Taipei, Taiwan 10617, email: {\tt b92201049@gmail.com}},
\and YunKyong Hyon \thanks{Division of Mathematical Models, National Institute for Mathematical Sciences, Daejeon, Republic of Korea 305-811 email: {\tt hyon@nims.re.kr}},
\and Hijin Lee  \thanks{Department of Mathematical Sciences, Korea Advanced Institute of
Science and Technology, Daejeon, Republic of Korea 305-701 email: {\tt hijin@kaist.ac.kr}},
\and Tai-Chia Lin \thanks{Institute of Applied Mathematical Sciences, Center for Advanced Study in Theoretical Sciences (CASTS), National Taiwan University, Taipei, Taiwan 10617,  email:{\tt  tclin@math.ntu.edu.tw}},
\and Chun Liu      \thanks{Department of Mathematics, Pennsylvania State University, University Park,  PA 16802, USA,
                            email: {\tt liu@math.psu.edu}}
}
\date{\today}
\begin{document}

\maketitle

\newcommand{\phie}{\phi_\epsilon}
\newcommand{\psie}{\psi_\epsilon}
\newcommand{\Cnk}{c_n^k}
\newcommand{\Cnkp}{c_n^{k+1}}
\newcommand{\Cnkpm}{c_n^{k+1,m}}
\newcommand{\Cnkpmp}{c_n^{k+1,m+1}}
\newcommand{\Cpk}{c_p^k}
\newcommand{\Cpkp}{c_p^{k+1}}
\newcommand{\Cpkpm}{c_p^{k+1,m}}
\newcommand{\Cpkpmp}{c_p^{k+1,m+1}}
\newcommand{\phik}{\phi^k}
\newcommand{\phip}{\phi^{k+1}}
\newcommand{\phipm}{\phi^{k+1,m}}
\newcommand{\phib}{\phi^{k+1,\bar{m}}}
\newcommand{\phipmp}{\phi^{k+1,m+1}}
\newcommand{\dsp}{\displaystyle}

\newcommand{\BE}{\begin{equation}}
\newcommand{\BEN}{\begin{equation*}}
\newcommand{\EE}{\end{equation}}
\newcommand{\EEN}{\end{equation*}}

\newtheorem{thm}{Theorem}
\newtheorem{lem}[thm]{Lemma}
\newtheorem{rmk}[thm]{Remark}

\begin{abstract}
In order to describe the dynamics of crowded ions (charged particles), we use an energetic variational approach to derive a modified Poisson-Nernst-Planck (PNP) system which includes an extra dissipation due to the effective velocity differences between ion species. Such a system has more complicated nonlinearities than the original PNP system but with the same equilibrium states. Using Galerkin's method and Schauder's fixed-point theorem, we develop a local existence theorem of classical solutions for the modified PNP system. Different dynamics (but same equilibrium states) between the original and modified PNP systems can be represented by numerical simulations using finite element method techniques.
\end{abstract}

\section{Introduction}
The dynamics of ion transport is important for the study of biophysics as it is involved in almost all biological activities.
The transport of charged particles (ions), by nature, is a multiscale problem. The competition of thermal fluctuation, in terms of entropy,  and molecular (Coulomb)  interactions mainly give intriguing and significant behaviors of the systems. Choices of the variables, in terms of energetic functionals and entropy production (dissipation) functionals, demonstrate specific physical situations or applications in consideration. By employing an energetic variational approach (see Section~\ref{sec1-1}), we can derive the original Poisson-Nernst-Planck (PNP) system (see Section~\ref{sec1-2}) which describe dilute ionic liquids~\cite{R1,RLW,RLZ}.

The transport of ions in biological environments are usually in non-ideal situations.
Ion channels often have characteristic property of very high density distributions of ions
that are crowded into tiny spaces with huge electric and chemical fields and forces of excluded volume (cf.~\cite{Er1, Er11, E1}). To describe the dynamics of crowded ions, the energy functional and the dissipation functional should be modified. For the energy functional, we combined the energy functional of the original PNP system with Lennard-Jones type (LJ) potential (similar to those  used for molecular dynamic simulations) and derive new PNP-type systems which captured certain
properties of selectivity of ion channels (cf.~\cite{EHL10, HLLE, HEL, LE-cms13}).

The dynamical systems for transport of ions involve various types of entropy production.
The classical PNP equation involves the entropy production, the dissipation, in terms of sum of
damping due to individual ion species.
In this study, we take into consideration of the extra dissipation due to a drag force between different  species.
This extra dissipative effect,  due to the drag between ion
species,  is incorporated into the  derivation of  a modified PNP system.
The entropy production of modified PNP mainly contributes to the dynamics of the system,
while the equilibrium states, which are determined by the free energy, remain the same.
In other applications of physics, such consideration had been taken into
account in the study of ion heating in a plasma flow (cf.~\cite{DSZ_pp2008}).

The modified PNP system has more complicated nonlinearities than the original PNP system but with the same equilibrium states. Using Galerkin's method and Schauder's fixed-point theorem, we develop a local existence theorem of classical solutions for the modified PNP system. Furthermore, different dynamics (but same equilibrium states) between the original and modified PNP systems can be represented by numerical simulations using finite element method techniques.

The rest of this paper is organized as follows: In Section~\ref{sec2}, we derive the modified PNP system. The local existence of the modified PNP system is proved in Section~\ref{sec3}. In Section~\ref{sec4}, we provide numerical results of the modified PNP system and comparisons to those of the original PNP system.

\section{General Diffusion for Transport of Charged Particles}\label{sec2}
In this section, we firstly introduce the energetic variation framework for diffusions and then apply it to derive
 the original PNP system. Such a framework can be employed to the problem of transport of ions in non-ideal, non-diluted situations.
 We derive a modified PNP system that takes into account of additional dissipation due to the effect of velocity differences between ion species.

\subsection{Energetic Variational Approaches for Diffusion}\label{sec1-1}

For an isothermal closed system, the combination of the First Law and the Second Law of Thermodynamics
yields the following energy dissipation law:
\begin{align}\label{diss_law}
\frac{d}{dt} E^{\rm total} = -\triangle,
\end{align}
where $E^{\rm total}$ is the sum of kinetic energy and total
Helmholtz free energy, and $\triangle$ is the
entropy production (energy dissipation rate in this case).
The choice of total energy functional and dissipation
functional, together with the kinematic (transport) relation of the variables
employed in the system, determines all the physics and the assumptions for problem.

The energetic variational approach is the precise framework to obtain the force balance equations from the general
dissipation law \eqref{diss_law}. In particular,  the Least
Action Principle (LAP) will determine the Hamiltonian
part of the system and the Maximum Dissipation Principle (MDP) for the dissipative part. Formally,
LAP states the fact that force multiplies distance is equal to the work, i.e.,
\begin{align}
\delta E = {\rm force} \times \delta x,
\end{align}
where $x$ is the position, $\delta$ is the variation (derivative) in
general senses. This procedure will give the
Hamiltonian part of the system and the conservative forces \cite{AbMa78,Ar89}. On
the other hand, MDP, by Onsager and Rayleigh \cite{On31,On31a,Ra73}, yields
dissipative forces of the system:
\begin{align}\label{eqn:dissipation}
\delta \frac{1}{2} \triangle = {\rm force} \times \delta {\dot x}.
\end{align}
The factor $1/2$ in (\ref{eqn:dissipation}) is  consistent with the choice of quadratic form of the ``rates", which in turn describes the
linear response theory for long-time near equilibrium dynamics {\cite{HDL,Ku76}.
For instance, we consider the following inhomogeneous diffusion equation
\begin{equation}
f_t = \nabla\cdot (b(x) \nabla (a(x) f)), \label{eqn:diffusion}
\end{equation}
where $a(x)$ and $b(x)$ are given positive functions depending only on space with certain regularity properties (for the sake of
demonstration in this paper, we assume them to be smooth functions).

In fact, we can start with the following energy dissipation law with prescribed (Helmholtz) free energy
and entropy production functionals:
\begin{equation}
\frac{d}{dt} \int f \log (a(x) f) \, dx = - \int \frac{1}{a(x)b(x)} f |u|^2 \, dx,
\end{equation}
where $f$ is a probability distribution function.
$u$ is the (effective) velocity of the dynamics, that is, for the flow map $x(X, t)$, we have
$u(x(X, t), t) = x_t (X, t)$ where $X$ is the reference coordinate.
Both $a(x)$ and $b(x)$ are given functions.
It is clear that $a(x)$ contributes to the final equilibrium of the system, while $b(x)$, after renormalization,
states for the dissipation rate.

The transport kinematics of the distribution function $f$ is just the conservation of mass law:
\begin{equation}
f_t + \nabla\cdot (uf) = 0.
\end{equation}

From the energetic variational approach point of view \cite{EHL10,XSL14},
this energy dissipation law includes all the physics of the system.
Employing the LAP, one takes the variation of the free energy functional (the integral on the left
hand side) with respect to
the flow map $x(X, t)$. At the same time, by MDP, one takes variation of the dissipation functional
(the integral on the right hand side) with respect to the velocity.
The total force balance, the summation of the two variations gives,
\begin{equation}
\frac{1}{a(x)} \nabla (a(x) f) = - \frac{1}{a(x)b(x)} fu.
\end{equation}
Combining these with the kinematic conservation of law equation of $f$, we obtain the general
inhomogeneous diffusion equation (\ref{eqn:diffusion}).
From the above derivation, we can see that there are two independent ingredients in inhomogeneous diffusion.
While $a(x)$ enters through free energy, $b(x)$, is more associated with dissipation.

\subsection{Poisson-Nernst-Planck (PNP) System}\label{sec1-2}
Here we recall PNP system that has been widely used to describe the transport of ionic solutions \cite{R1,RLW,RLZ}. Consider positive and negative ions with charge concentrations, $c_p$, $c_n$, respectively.The dissipative energy law of ion dynamics including Brownian motion of charged ions is given as
\begin{eqnarray}
 &&\frac{d}{dt} \int \left\{ k_B T ( c_p \ln c_p + c_n \ln c_n ) +\frac{\varepsilon}{2} | \nabla \phi |^2 \right\} dx \nonumber\\[-0.7em]
 &&\label{eqn:DissipEnergyLaw}\\[-0.7em]
 &=& -\int k_B T\left( D_pc_p\left| \frac{\nabla c_p}{c_p}  +\frac{z_p q}{k_B T} \nabla\phi \right|^2
 +D_n c_n \left| \frac{\nabla c_n}{c_n}  +\frac{z_n q}{k_B T} \nabla\phi \right|^2 \right) dx\nonumber
\end{eqnarray}
where $k_B$ is the Boltzmann constant, $T$ is the absolute temperature,
$\varepsilon$ is the dielectric constant of the medium, $\phi$ is the electrostatic potential,
$D_p$, $D_n$ are the diffusion constants
and $z_p$, $z_n$ are the valences, for positive, negative ions, respectively.
Then the Nernst-Planck equations for ion dynamics from the dissipative energy law (\ref{eqn:DissipEnergyLaw})
can be derived by the energetic variational approach with the following evolution equations \cite{TK93}:
\begin{eqnarray}
 \frac{\partial c_p}{\partial t}
   =  \nabla \cdot \left( \frac{D_p}{k_B T} c_p \nabla \mu_p \right),\quad
 \frac{\partial c_n}{\partial t}
   =  \nabla \cdot \left( \frac{D_n}{k_B T} c_n \nabla \mu_n \right) \label{eqn:PNPsystem_0}
\end{eqnarray}
where $\mu_p$, $\mu_n$ are the chemical potentials obtained by the variational derivatives
of the total energy, which is the left hand side in (\ref{eqn:DissipEnergyLaw}) with respect to the charge densities.
Explicit forms of the chemical potentials are given as $\mu_p = k_B T (\ln c_p +1) +z_p q \phi$ and $\mu_n = k_B T (\ln c_n +1) +z_n q \phi$.

The full system of equations for the dynamic of ion transport is then given by
\begin{eqnarray}
 \frac{\partial c_p}{\partial t}
  & = & \nabla \cdot \left\{ D_p \left( \nabla c_p +\frac{z_p q}{k_B T} c_p \nabla \phi \right) \right\},  \label{eqn:PNPsystem_0.1} \\
 \frac{\partial c_n}{\partial t}
  & = & \nabla \cdot \left\{ D_n \left( \nabla c_n +\frac{z_n q}{k_B T} c_n \nabla \phi \right) \right\},  \label{eqn:PNPsystem_0.2} \\
 \nabla \cdot (\varepsilon \nabla \phi) & = & -z_p q c_p -z_n q c_n\,, \label{eqn:PNPsystem_0.3}
\end{eqnarray}
which is called the original Poisson-Nernst-Planck (PNP) system. Existence and uniqueness theorems for such a system~\cite{BHN1994, G-zm1985, S-na1985} and a generalized PNP system coupled with Navier-Stokes equation~\cite{R1,RLW,RLZ} were developed in order to study the dynamics of ion transport, respectively.

The original PNP system can also be viewed as a special form of general diffusion,
which takes into account of particle-particle interaction through Coulomb forces \cite{XSL14}.
To demonstrate this,  we start with the following system of equations with some vector fields $\vec{u}_n$, $\vec{u}_p$.
\begin{eqnarray}
\frac{\partial c_n}{\partial t} + \nabla \cdot \left( c_n \vec{u}_n \right) = 0,\quad
\frac{\partial c_p}{\partial t} + \nabla \cdot \left( c_p \vec{u}_p \right) = 0 \label{eqn:system1}
\end{eqnarray}
satisfying the following free energy:
\begin{eqnarray}
 \mathcal{A} & = &\hspace{-0.3em} \int\hspace{-0.3em} \left\{k_B T (c_n \ln c_n \hspace{-0.2em}+\hspace{-0.2em} c_p \ln c_p)
 + \frac12 \hspace{-0.2em}\int\hspace{-0.2em} G^{\varepsilon}(x \hspace{-0.2em}-\hspace{-0.2em}y) (c_n \hspace{-0.2em}-\hspace{-0.2em}c_p)(x) (c_n \hspace{-0.2em}-\hspace{-0.2em}c_p)(y)dy\right\} dx,\label{eqn:FreeEnergy}
\end{eqnarray}
which corresponds to the total energy in (\ref{eqn:DissipEnergyLaw}) in a special case,
and has the entropy production $\triangle$, i.e., dissipation,
\begin{eqnarray}\label{eqn:Dissipation}
 \triangle & = & \int \left( \frac{k_B T}{D_n} c_n | \vec{u}_n |^2 +\frac{k_B T}{D_p} c_p | \vec{u}_p |^2 \right) dx.
\end{eqnarray}
The entropy production explains that the system is in a linear response region originated by the free energy (\ref{eqn:FreeEnergy}).

Then the force balance law between conservative and dissipative forces implies that
\begin{eqnarray}
 c_p \nabla \frac{\delta \mathcal{A}}{\delta c_p} = -\frac{k_B T}{D_p} c_p \vec{u}_p
 = -\frac12 c_p \frac{\delta \triangle}{\delta \vec{u}_p},\label{eqn:ForceBalanceLaw_p}\\
 c_n \nabla \frac{\delta \mathcal{A}}{\delta c_n} = -\frac{k_B T}{D_n} c_n \vec{u}_n
 = -\frac12 c_n \frac{\delta \triangle}{\delta \vec{u}_n}\label{eqn:ForceBalanceLaw_n}
\end{eqnarray}
that is,
\begin{equation}
 \frac{D_n}{k_B T} c_n\nabla \frac{\delta \mathcal{A}}{\delta c_n} = - c_n \vec{u}_n,\quad
 \frac{D_p}{k_B T} c_p\nabla \frac{\delta \mathcal{A}}{\delta c_p} = - c_p \vec{u}_p.\label{eqn:ForceBalanceLaw_s}
\end{equation}
From these derivations and manipulations, it is clear that while the original PNP system resembles to those
diffusion-drift equations, in fact, the only ingredient is diffusion, although being of nonlocal features.
Such an observation can be important in designing numerical algorithms as well as analysis.

\subsection{Modified PNP system: Entropy Production}\label{sec2.2}
From the above discussions on original PNP systems,
one easily see that the PNP system possesses a linear response of entropy production,
which describe the physical nature of near equilibrium of the whole system.
While the free energy includes all the information and properties of equilibrium states,
the dissipation functional, i.e., the entropy production governs the dynamics of the system.
Understanding  statistical physics and nonlinear thermodynamics properties of systems describing interactions
 between  different ion species
are extremely important  in order to obtain a realistic dynamic of ion transport,
especially for those with crowded ion populations,
which is very common in situations like  biological ion channels (cf.~\cite{Er1, Er11, E1}). The earlier studies of
such  nonlinear interactions
 had mostly focused on the total energy (cf.~\cite{EHL10, HLLE, HEL, LE-cms13}).

In what follows, we will consider modifying the entropy production rather than the total free energy.
It is clear such modification would only change the dynamics of the system when approaching the
equilibrium states, which are the same as those for classical PNP systems.

To take into account of dissipations due to  interaction between different species,
we will add a drag term as $\frac{k_B T}{D_{n,p}} c_n c_p | \vec{u}_n -\vec{u}_p |^2$ that
is due to the relative velocity differences to (\ref{eqn:Dissipation}).
The resulting modified entropy production becomes
\begin{eqnarray}\label{Dissipation2}
 \triangle^{\ast} = \int \left( \frac{k_B T}{D_n} c_n | \vec{u}_n^{\ast} |^2
  +\frac{k_B T}{D_p} c_p | \vec{u}_p^{\ast} |^2 +\frac{k_B T}{D_{n,p}} c_n c_p | \vec{u}_n^{\ast} -\vec{u}_p^{\ast} |^2 \right) dx.
\end{eqnarray}
The third term in the right hand side of (\ref{Dissipation2}) is a higher order correction
in terms of both densities and mobility constants. Note that we can
also choose the mobility coefficients $D_{n,p}$ for the higher order correction as one of (a) $\frac{D_n +D_p}{2}$: arithmetic average (b) $\frac{2 D_n D_p}{D_n +D_p}$: harmonic average (c) $\sqrt{D_n D_p}$: geometric average.

The same derivation as those of  (\ref{eqn:ForceBalanceLaw_p}), (\ref{eqn:ForceBalanceLaw_n})
will yield the force balance relations as:
\begin{eqnarray}\label{eqn:ForceBalanceLaw_2_1}
 c_n \nabla \frac{\delta \mathcal{A}}{\delta c_n} = -\frac12 c_n \frac{\delta \triangle^{\ast}}{\delta \vec{u}_n^{\ast}}
  & = & - \left\{ \frac{k_B T}{D_n} c_n \vec{u}_n^{\ast} +\frac{k_B T}{D_{n,p}} c_n c_p ( \vec{u}_n^{\ast} -\vec{u}_p^{\ast} ) \right\},\\
 c_p \nabla \frac{\delta \mathcal{A}}{\delta c_p} = -\frac12 c_p \frac{\delta \triangle^{\ast}}{\delta \vec{u}_p^{\ast}}
  & = & - \left\{ \frac{k_B T}{D_p} c_p \vec{u}_p^{\ast} +\frac{k_B T}{D_{n,p}} c_n c_p ( \vec{u}_p^{\ast} -\vec{u}_n^{\ast} ) \right\}.
\end{eqnarray}
Noticing the coefficient in front of $\vec{u}_p^{\ast}$ in the first equation are exactly equal the coefficient in front of $\vec{u}_n^{\ast}$ in the
second equation. This is the exact manifestation of Onsager's reciprocal relations \cite{On31,On31a} in transport of different charged species.

Solving for ion fluxes in these equations, then we have that
\begin{eqnarray}\label{eqn:ForceBalanceLaw_3_1}
 c_n \vec{u}_n^{\ast} & = & - \frac{(D_{n,p} + D_p c_n)c_n \vec{u}_n + D_n c_n c_p \vec{u}_p}{D_{n,p} + D_n c_p + D_p c_n}, \\
 c_p \vec{u}_p^{\ast} & = & - \frac{(D_{n,p} + D_n c_p)c_p \vec{u}_p + D_p c_p c_n \vec{u}_n}{D_{n,p} + D_n c_p + D_p c_n}.
\end{eqnarray}
Again as for (\ref{eqn:system1}), we utilize the conservation of mass equations for both $c_p$ and $c_n$:
\begin{eqnarray*}
\frac{\partial c_n}{\partial t} + \nabla \cdot \left( c_n \vec{u}_n^* \right) = 0,\quad
\frac{\partial c_p}{\partial t} + \nabla \cdot \left( c_p \vec{u}_p^* \right) = 0\,,
\end{eqnarray*}
to get the resulting modified Nernst-Planck equations as follows:
\begin{eqnarray}
  & & \frac{\partial c_p}{\partial t} = -\nabla \cdot \left( c_p \vec{u}_p^{\ast} \right) \nonumber \\
  & = & -\nabla \cdot \left[ \frac{(D_{n,p} + D_n c_p)c_p \vec{u}_p + D_p c_p c_n \vec{u}_n}{D_{n,p} + D_n c_p + D_p c_n} \right] \label{eqn:MPNPsystem_1.2} \\
  & = & \nabla \cdot \left[ \frac{(D_{n,p} + D_n c_p)D_p( \nabla c_p + \frac{z_p q}{k_B T} c_p \nabla \phi ) + (D_p c_p) D_n ( \nabla c_n + \frac{z_n q}{k_B T} c_n \nabla \phi )}{D_{n,p} + D_n c_p + D_p c_n} \right]\nonumber\\
  && \frac{\partial c_n}{\partial t} = -\nabla \cdot \left( c_n \vec{u}_n^{\ast} \right) \nonumber \\
  & = & -\nabla \cdot \left[ \frac{(D_{n,p} + D_p c_n)c_n \vec{u}_n + D_n c_n c_p \vec{u}_p}{D_{n,p} + D_n c_p + D_p c_n} \right] \label{eqn:MPNPsystem_1.1}\\
  & = & \nabla \cdot \left[ \frac{(D_{n,p} + D_p c_n)D_n ( \nabla c_n + \frac{z_n q}{k_B T} c_n \nabla \phi ) + (D_n c_n) D_p( \nabla c_p + \frac{z_p q}{k_B T} c_p \nabla \phi )}{D_{n,p} + D_n c_p + D_p c_n} \right].\nonumber
\end{eqnarray}

Without lost of generalities, we choose $D = D_n = D_p$. Then $D_{n,p} = D$ and have the modified PNP system as:
\begin{eqnarray}
 &&\hspace{-0.7em}\frac{\partial c_n}{\partial t} =\hspace{-0.2em}  \nabla \hspace{-0.1em}\cdot\hspace{-0.1em} \left\{\hspace{-0.2em} \frac{D ( 1 +c_n )}{1 +c_n +c_p} \hspace{-0.2em}\left(\hspace{-0.2em} \nabla c_n \hspace{-0.2em}+ \hspace{-0.2em} \frac{z_n q}{k_B T} c_n \nabla \phi \right) \hspace{-0.2em}+\hspace{-0.2em}\frac{D c_n}{1 +c_n +c_p}\hspace{-0.2em} \left(\hspace{-0.2em} \nabla c_p \hspace{-0.2em}+\hspace{-0.2em} \frac{z_p q}{k_B T} c_p \nabla \phi \hspace{-0.2em}\right)\hspace{-0.3em} \right\} \label{eqn:MPNPsystem_2.1} \\
 &&\hspace{-0.7em}\frac{\partial c_p}{\partial t} =\hspace{-0.2em} \nabla \hspace{-0.1em}\cdot\hspace{-0.1em} \left\{ \hspace{-0.1em}\frac{D ( 1 + c_p )}{1 +c_n +c_p} \hspace{-0.2em}\left(\hspace{-0.2em} \nabla c_p \hspace{-0.2em}+\hspace{-0.2em} \frac{z_p q}{k_B T} c_p \nabla \phi \right) \hspace{-0.2em}+\hspace{-0.2em}\frac{D c_p}{1 +c_n +c_p} \hspace{-0.2em}\left(\hspace{-0.2em} \nabla c_n \hspace{-0.2em}+\hspace{-0.2em} \frac{z_n q}{k_B T} c_n \nabla \phi \hspace{-0.2em}\right)\hspace{-0.3em} \right\} \label{eqn:MPNPsystem_2.2} \\
 &&\hspace{-0.7em}\nabla \cdot (\varepsilon \nabla \phi) = -z_n q c_n -z_p q c_p. \label{eqn:MPNPsystem_2.3}
\end{eqnarray}
The entropy production of the modified PNP system (\ref{eqn:MPNPsystem_2.1})--(\ref{eqn:MPNPsystem_2.3}) is
\begin{eqnarray}\label{eqn:Dissipation_4}
&& \triangle^{\ast}  = \int \left( \frac{k_B T}{D_n} c_n | \vec{u}_n^{\ast} |^2
  +\frac{k_B T}{D_p} c_p | \vec{u}_p^{\ast} |^2 +\frac{k_B T}{D_{n,p}} c_n c_p | \vec{u}_n^{\ast} -\vec{u}_p^{\ast} |^2 \right) dx \nonumber \\
& = & Dk_B T \hspace{-0.2em}\int \left\{\hspace{-0.2em}c_n\hspace{-0.2em} \left| \frac{1 +c_n}{1 +c_n +c_p} \hspace{-0.2em}\left(\hspace{-0.2em}\frac{\nabla  c_n}{c_n}  + \frac{z_n q}{k_B T }\nabla \phi\hspace{-0.2em}\right)\hspace{-0.2em} +\hspace{-0.2em}\frac{c_p}{1 +c_n +c_p}\hspace{-0.2em} \left(\hspace{-0.2em} \frac{\nabla c_p}{c_p}\hspace{-0.2em} +\hspace{-0.2em} \frac{z_p q}{k_BT} \nabla\phi \hspace{-0.2em}\right)\hspace{-0.2em} \right|^2 \right.\nonumber \\
  & & + c_p \left| \frac{1 +c_p}{1 +c_n +c_p} \hspace{-0.2em}\left(\hspace{-0.2em} \frac{\nabla c_p}{c_p} \hspace{-0.2em}+ \hspace{-0.2em} \frac{z_p q}{k_BT} \nabla\phi \hspace{-0.2em}\right) + \frac{c_n}{1 +c_n +c_p} \hspace{-0.2em}\left(\hspace{-0.2em}\frac{\nabla c_n}{c_n} \hspace{-0.2em}+\hspace{-0.2em} \frac{z_n q}{k_BT} \nabla\phi \hspace{-0.2em}\right)\hspace{-0.2em} \right|^2 \\
  & & + \left.c_n c_p \left| \frac{1}{1 +c_n +c_p}\hspace{-0.2em} \left( \hspace{-0.2em}\frac{\nabla c_n}{c_n}\hspace{-0.2em} +\hspace{-0.2em} \frac{z_n q}{k_BT}\nabla\phi \hspace{-0.2em}\right)\hspace{-0.2em} - \frac{1}{1 +c_n +c_p}\hspace{-0.2em} \left(\hspace{-0.2em} \frac{\nabla c_p}{c_p} \hspace{-0.2em}+\hspace{-0.2em} \frac{z_pq}{k_BT}\nabla\phi \hspace{-0.2em}\right) \hspace{-0.2em}\right|^2\right\} dx, \nonumber
\end{eqnarray}
while  the original entropy production  of the classical PNP system takes the form as:
\begin{eqnarray}\label{eqn:Dissipation_3}
 \triangle & = & \int \left( \frac{k_B T}{D_n} c_n \left| \vec{u}_n \right|^2 +\frac{k_B T}{D_p} c_p \left| \vec{u}_p \right|^2 \right) dx \nonumber \\
 & = & Dk_B T\int  \left(c_n \left| \frac{\nabla c_n}{c_n} + \frac{z_n q}{k_B T}\nabla \phi \right|^2 + c_p \left| \frac{\nabla c_p}{c_p}
 +\frac{z_p q }{k_BT} \nabla\phi \right|^2 \right)dx.
\end{eqnarray}

The resulting modified PNP system (\ref{eqn:MPNPsystem_2.1})--(\ref{eqn:MPNPsystem_2.3}) involves much more complicated nonlinear coupling between unknown variables. Comparing with the original PNP system of equations (\ref{eqn:PNPsystem_0.1})--(\ref{eqn:PNPsystem_0.3}), it brings
extra difficulties in analysis. In the next section, as a first step in our systematical studies,
 we present the proof of the local existence theorem of classical solutions for the modified PNP system (\ref{eqn:MPNPsystem_2.1})--(\ref{eqn:MPNPsystem_2.3}).

\section{Local Existence of Solutions for the Modified PNP}\label{sec3}
The modified PNP system (\ref{eqn:MPNPsystem_2.1})--(\ref{eqn:MPNPsystem_2.3})
posses rather complicated saturable nonlinear terms in the forms as:
$$
\frac{1+{{c}_{n}}}{1+{{c}_{n}}+{{c}_{p}}}\,,\:\frac{{{c}_{n}}}{1+{{c}_{n}}+{{c}_{p}}}\,,\:
\frac{1+{{c}_{p}}}{1+{{c}_{n}}+{{c}_{p}}}\,,\:\frac{{{c}_{p}}}{1+{{c}_{n}}+{{c}_{p}}}
$$
as coefficients, which are found in coupling $\nabla {{c}_{n}}+\frac{{{z}_{n}}q}{{{k}_{B}}T}{{c}_{n}}\nabla \phi $ and $\nabla {{c}_{p}}+\frac{{{z}_{p}}q}{{{k}_{B}}T}{{c}_{p}}\nabla \phi $
The coefficients are different from the original PNP system of equations (\ref{eqn:PNPsystem_0.1})--(\ref{eqn:PNPsystem_0.3}).
Formally, if $1\ll {{c}_{n}}\ll {{c}_{p}}$, then $0<\frac{1+{{c}_{n}}}{1+{{c}_{n}}+{{c}_{p}}}\,,\:\frac{{{c}_{n}}}{1+{{c}_{n}}+{{c}_{p}}}\ll 1$ and equation (\ref{eqn:MPNPsystem_2.1}) becomes degenerate parabolic. Similarly, if $1\ll {{c}_{p}}\ll {{c}_{n}}$, then $0<\frac{1+{{c}_{p}}}{1+{{c}_{n}}+{{c}_{p}}}\,,\:\frac{{{c}_{p}}}{1+{{c}_{n}}+{{c}_{p}}}\ll 1$ and equation (\ref{eqn:MPNPsystem_2.2}) becomes degenerate parabolic.
Both $1\ll {{c}_{n}}\ll {{c}_{p}}$ and $1\ll {{c}_{p}}\ll {{c}_{n}}$ can be excluded if ${{c}_{n}}$ and ${{c}_{p}}$ are nonnegative and bounded for $x\in \Omega $ and $t\in \left( 0,T \right)$.
However, the maximum principle of (\ref{eqn:MPNPsystem_2.1})--(\ref{eqn:MPNPsystem_2.3}) has not yet been proved.
Thus it is nontrivial to assure that ${{c}_{n}},{{c}_{p}}\ge 0$ for $x\in \Omega ,t\in \left( 0,T \right)$ if the initial data  ${{c}_{n,0}},{{c}_{p,0}}\ge 0$ for $x\in \Omega$.
The fact motivates us to find nonnegative and bounded solution of (\ref{eqn:MPNPsystem_2.1})--(\ref{eqn:MPNPsystem_2.3}) in a finite time interval.

We now develop a local existence theorem for the modified PNP system of equations (\ref{eqn:MPNPsystem_2.1})--(\ref{eqn:MPNPsystem_2.3}) using Galerkin's method and Schauder's fixed-point theorem.
Because Schauder's fixed-point theorem cannot give the uniqueness of fixed point, it seems impossible to prove the uniqueness by our argument. For simplicity of derivation, we may set $D=k_B=T=\varepsilon=q=1$, $z_n = -1$, and $z_p = 1$ for equations (\ref{eqn:MPNPsystem_2.1})--(\ref{eqn:MPNPsystem_2.3}). Let $\Omega$ be a smooth and bounded domain in $\mathbb{R}^d$, $d\leq 3$. Then the modified PNP system can be written as
\begin{align}
\frac{\partial c_n}{\partial t} &= \nabla \cdot \left[\frac{1}{1+c_n+c_p} \Big((1+c_n)(\nabla c_n - c_n\nabla \phi) + c_n (\nabla c_p + c_p\nabla \phi)\Big)\right], \label{id1}\\
\frac{\partial c_p}{\partial t} &= \nabla \cdot \left[\frac{1}{1+c_n+c_p} \Big((1+c_p)(\nabla c_p + c_p\nabla \phi) + c_p (\nabla c_n - c_n\nabla \phi)\Big)\right], \label{id2}\\
\Delta \phi &= c_n - c_p \,, \quad\hbox{ for }\quad x\in\Omega\,, t>0\,,\label{id3}
\end{align}
with no-flux boundary conditions of $c_n$ and $c_p$, and Robin boundary condition of $\phi$ as follows:
\begin{align}
(\nabla c_n - c_n\nabla \phi)\cdot\nu &= 0, \label{id4} \\
(\nabla c_p + c_p\nabla \phi)\cdot\nu &= 0, \label{id5} \\
\phi + \alpha \frac{\partial \phi}{\partial \nu} &= \phi_0\,,\quad\hbox{ for }\quad x\in\partial\Omega\,, t>0\,,\label{id6}
\end{align}
where $\alpha$ is a nonnegative constant, $\nu$ is the unit outer normal vector to $\partial\Omega$,
and $\phi_0 = \phi_1 + \alpha \frac{\partial \phi_1}{\partial \nu}$
on $\partial\Omega$ for some $\phi_1 \in H^2(\Omega)$. For the initial data, we assume that
\begin{align}
0\le c_n(\cdot,0) &= c_{n,0}\in L^\infty(\Omega)\,, \label{id7} \\
0\le c_p(\cdot,0) &= c_{p,0}\in L^\infty(\Omega)\,,  \label{id8}
\end{align}
and $\phi(\cdot,0)$ is uniquely determined by (\ref{id3}) at $t=0$ with (\ref{id6}).

In order to find a local solution of (\ref{id1})--(\ref{id8}) in a finite time interval $(0,t_1)$, we consider the fixed point problem of the following map:
\BE\label{F-mp1}
F((\bar{c}_n,\bar{c}_p)) =(c_n,c_p) \quad\hbox{ for }\quad (\bar{c}_n,\bar{c}_p)\in X\,,
\EE
where
\begin{align*}
X = \{(f,g): f,g \in L^4((0,t_1);L^2(\Omega))\}\,,\quad t_1>0\,
\end{align*}
with the following specific norm
\begin{align*}
\|(f,g)\|_X = \|f\|_{L^4((0,t_1);L^2(\Omega))} + \|g\|_{L^4((0,t_1);L^2(\Omega))},
\end{align*}
and $(c_n,c_p)$ is the solution of
\begin{align}
\frac{\partial c_n}{\partial t} &= \nabla \cdot \left\{\frac{1}{1+\bar{c}_{n}^*+\bar{c}_p^*} \Big((1+\bar{c}_{n}^*)(\nabla c_n - c_n\nabla \bar{\phi}) + \bar{c}_{n}^* (\nabla c_p + c_p\nabla \bar{\phi})\Big)\right\}, \label{id9} \\
\frac{\partial c_p}{\partial t} &= \nabla \cdot \left\{\frac{1}{1+\bar{c}_{n}^*+\bar{c}_p^*} \Big((1+\bar{c}_p^*)(\nabla c_p + c_p\nabla \bar{\phi}) + \bar{c}_p^* (\nabla c_n - c_n\nabla \bar{\phi})\Big)\right\}\,, \label{id10}
\end{align}
with the initial data (\ref{id7})--(\ref{id8}) and boundary conditions
\begin{align}
(\nabla c_n - c_n\nabla \bar{\phi}) \cdot\nu &= 0, \label{id11} \\
(\nabla c_p + c_p\nabla \bar{\phi}) \cdot\nu &= 0\,. \label{id12}
\end{align}
The system of equations (\ref{id9}) and (\ref{id10}) is a linear system of parabolic equations of $c_n$ and $c_p$.

Let here
\begin{align}
\bar{c}_n^* &= \min\{\max\{\bar{c}_n,0\},5M_0\}, \label{bcn-up} \\
\bar{c}_p^* &= \min\{\max\{\bar{c}_p,0\},5M_0\}, \label{bcp-up} \\
M_0 &=\max\{\|c_{n,0}\|_{L^\infty(\Omega)},\|c_{p,0}\|_{L^\infty(\Omega)},1\}\,, \label{df-m0}
\end{align}
and let $\bar{\phi}$ be the solution of $\Delta \bar{\phi }={{\bar{c}}_{n}}-{{\bar{c}}_{p}}$ in $\Omega$ with the boundary condition (\ref{id6}).

Let
\begin{eqnarray*}
u &=& c_n+c_p\,,\hspace{1cm}  v= c_n-c_p\,, \\
\bar{u}&=&\bar{c}_n+\bar{c}_p\,, \hspace{1cm} \bar{v}=\bar{c}_n-\bar{c}_p\,, \\
\bar{u}^*&=&\bar{c}_n^*+\bar{c}_p^*\,, \hspace{1cm} \bar{v}^*=\bar{c}_n^*-\bar{c}_p^*\,.
\end{eqnarray*}
Then by adding and subtracting equations (\ref{id9}) and (\ref{id10}),
the system of equations for $u$, $v$ is given by
\begin{align}
u_t &= \nabla \cdot (\nabla u - v\nabla \bar{\phi})\,, \label{id13} \\
v_t &= \nabla \cdot \Big(\frac{1}{1+\bar{u}^*} (\nabla v - u\nabla \bar{\phi}) + \frac{\bar{v}^*}{1+\bar{u}^*} (\nabla u - v\nabla \bar{\phi})\Big)\,. \label{id14}
\end{align}
with the boundary and initial conditions
\begin{align}
& (\nabla u - v\nabla \bar{\phi})\cdot\nu = 0, \label{id15} \\
& (\nabla v - u\nabla \bar{\phi})\cdot\nu = 0\,, \label{id16} \\
& u(x,0) = u_0 = c_{n,0} + c_{p,0}, \label{id16-1} \\
& v(x,0) = v_0 = c_{n,0} - c_{p,0}\,. \label{id16-2}
\end{align}
Since $u$ and $v$ are linear combinations of $c_n$ and $c_p$,
one can easily recover the solution $(c_n,c_p)$ of (\ref{id9})--(\ref{id12}) with the initial data (\ref{id7})--(\ref{id8}) from $(u,v)$ of (\ref{id13})--(\ref{id16-2}). By (\ref{bcn-up})-(\ref{df-m0}), we also obtain
\BE\label{uv-bar}
0<\frac{1}{1+10{{M}_{0}}}\le \frac{1}{1+{{{\bar{u}}}^{*}}}\le 1,\quad \left| \frac{{{{\bar{v}}}^{*}}}{1+{{{\bar{u}}}^{*}}} \right|\le 1\,,
\EE
which are crucial for the study  of equation (\ref{id14}). Note that $0\le {{\bar{u}}^{*}}\le 10{{M}_{0}}$ and $\left| {{{\bar{v}}}^{*}} \right|\le {{\bar{u}}^{*}}$ because of $0\le \bar{c}_{n}^{*},\bar{c}_{p}^{*}\le 5{{M}_{0}}$.

The apriori estimate of the solution of (\ref{id13})-(\ref{id16-2}) is given as follows:\\
\begin{lem} \label{lem1}
Let $(u,v)$ be the solution of (\ref{id13})-(\ref{id16-2}). Then there exist positive constants $K_1, K_2$ and $\gamma$ depending only on $\alpha$, $M_0$, $d$, and  $\Omega$ such that
\begin{align}
&\frac{d}{dt} \int_\Omega (K_1 u^2 + v^2) dx + \gamma \int_\Omega (|\nabla u|^2 + |\nabla v|^2) dx \notag \\
&\quad\quad\leq K_2 \left(\int_\Omega (K_1 u^2 + v^2) dx\right)\left(1 + (\|\bar{v}\|_{L^2(\Omega)}^2 + \|\phi_1\|_{H^2(\Omega)}^2)^2\right). \label{id17}
\end{align}
Note that $\phi_1 \in H^2(\Omega)$ satisfies $\phi_1 + \alpha \frac{\partial \phi_1}{\partial \nu}=\phi_0$ on $\partial\Omega$, where $\phi_0$ and $\alpha$ come from the Robin boundary condition (\ref{id6}). Moreover, $\bar{v}= \bar{c}_n-\bar{c}_p=\Delta \bar{\phi }$ in $\Omega$.
\end{lem}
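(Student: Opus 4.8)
The plan is to obtain \eqref{id17} by a direct weighted $L^2$-energy estimate for the linear parabolic system \eqref{id13}--\eqref{id14}: test the $u$-equation with $K_1u$ and the $v$-equation with $v$, and absorb every gradient term that appears on the right into the parabolic dissipation, paying for it with high powers of $\|\bar\phi\|_{H^2(\Omega)}$. As a preliminary, I would record the elliptic bound for $\bar\phi$. Writing $\bar\phi=\phi_1+w$, the corrector $w$ solves $\Delta w=\bar v-\Delta\phi_1$ in $\Omega$ with the \emph{homogeneous} Robin condition $w+\alpha\,\partial_\nu w=0$ on $\partial\Omega$; since $\alpha\ge0$ this problem is well posed, and $H^2$-regularity on the smooth bounded $\Omega$ gives $\|\bar\phi\|_{H^2(\Omega)}\le C(\|\bar v\|_{L^2(\Omega)}+\|\phi_1\|_{H^2(\Omega)})$ with $C=C(\alpha,\Omega)$. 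By the Sobolev embeddings $H^2(\Omega)\hookrightarrow W^{1,6}(\Omega)$ and $H^1(\Omega)\hookrightarrow L^6(\Omega)$, valid for $d\le3$, the same right-hand side controls $\|\nabla\bar\phi\|_{L^6(\Omega)}$; for $d=3$ this is sharp, so $\nabla\bar\phi$ is \emph{not} available in $L^\infty$.

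Next I would multiply \eqref{id13} by $K_1u$ and \eqref{id14} by $v$, integrate over $\Omega$, and integrate by parts. All boundary contributions vanish, because the flux in \eqref{id14} dotted with $\nu$ is a linear combination of $(\nabla v-u\nabla\bar\phi)\cdot\nu$ and $(\nabla u-v\nabla\bar\phi)\cdot\nu$, both zero by \eqref{id15}--\eqref{id16}, and the same holds for \eqref{id13}. Using $1/(1+\bar u^*)\ge1/(1+10M_0)$ from \eqref{uv-bar}, one arrives at
\begin{equation*}
\tfrac12\tfrac{d}{dt}\!\int_\Omega(K_1u^2+v^2)\,dx+K_1\!\int_\Omega|\nabla u|^2\,dx+\tfrac1{1+10M_0}\!\int_\Omega|\nabla v|^2\,dx\le\mathcal{R},
\end{equation*}
where $\mathcal{R}$ collects the cross terms $K_1\!\int v\,\nabla\bar\phi\cdot\nabla u$, $\int\tfrac{u}{1+\bar u^*}\nabla\bar\phi\cdot\nabla v$, $-\!\int\tfrac{\bar v^*}{1+\bar u^*}\nabla u\cdot\nabla v$ and $\int\tfrac{\bar v^*v}{1+\bar u^*}\nabla\bar\phi\cdot\nabla v$.

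To estimate $\mathcal{R}$: the pure coupling term $-\!\int\tfrac{\bar v^*}{1+\bar u^*}\nabla u\cdot\nabla v$ is bounded, using $|\bar v^*/(1+\bar u^*)|\le1$ from \eqref{uv-bar}, by $\|\nabla u\|_{L^2}\|\nabla v\|_{L^2}$ and hence, by Young's inequality with weight $1/(2(1+10M_0))$, by $\tfrac1{2(1+10M_0)}\|\nabla v\|_{L^2}^2+\tfrac{1+10M_0}{2}\|\nabla u\|_{L^2}^2$; this is why $K_1=K_1(M_0)$ must be chosen large (e.g.\ $K_1=1+10M_0$), so that the $\|\nabla u\|_{L^2}^2$ part is swallowed by the $u$-dissipation. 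Each of the three remaining terms carries a factor $\nabla\bar\phi$; I would bound it by $\|\nabla(\cdot)\|_{L^2}\|w\|_{L^3}\|\nabla\bar\phi\|_{L^6}$ with $w\in\{u,v\}$, use the interpolation $\|w\|_{L^3}\le C\|w\|_{L^2}^{1/2}\|w\|_{H^1}^{1/2}\le C\|w\|_{L^2}+C\|w\|_{L^2}^{1/2}\|\nabla w\|_{L^2}^{1/2}$, and apply Young's inequality in two stages --- first peeling off $\epsilon\|\nabla(\cdot)\|_{L^2}^2$, then peeling off $\epsilon\|\nabla w\|_{L^2}^2$ from the residual proportional to $\|\nabla(\cdot)\|_{L^2}^{3/2}$. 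This gives
\begin{equation*}
\mathcal{R}\le\epsilon(K_1+1)\!\int_\Omega(|\nabla u|^2+|\nabla v|^2)\,dx+C_\epsilon(K_1+1)\Big(\!\int_\Omega(u^2+v^2)\,dx\Big)\big(1+(\|\bar v\|_{L^2(\Omega)}^2+\|\phi_1\|_{H^2(\Omega)}^2)^2\big),
\end{equation*}
the quartic dependence being exactly the cost of the second Young step, in which $\|\nabla\bar\phi\|_{L^6}$ is raised to the fourth power before one invokes $\|\nabla\bar\phi\|_{L^6}^4\le C(\|\bar v\|_{L^2}^2+\|\phi_1\|_{H^2}^2)^2$. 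Fixing $K_1$ first (large, depending on $M_0$) and then $\epsilon$ small (depending on $K_1$ and $M_0$) absorbs all gradient terms of $\mathcal{R}$ into the left side, leaving a positive multiple $\gamma=\gamma(M_0)$ of $\int_\Omega(|\nabla u|^2+|\nabla v|^2)$; since $K_1\ge1$ one has $u^2+v^2\le K_1u^2+v^2$, and rescaling by $2$ yields \eqref{id17} with $K_2$ depending on $M_0$, $d$, $\Omega$ (through the Sobolev constants) and $\alpha$ (through the elliptic constant).

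I expect the main obstacle to be precisely the estimation of these drift cross terms in dimension $d=3$, where $\nabla\bar\phi$ lies only in $L^6$ and the quantity $\|\nabla\bar\phi\|_{L^6}$ is not small, so the gradient terms cannot be absorbed into the dissipation directly. The cure --- Gagliardo--Nirenberg interpolation of $\|w\|_{L^3}$ between $L^2$ and $L^6$ together with the two-stage Young's inequality --- trades each such term for $\epsilon\|\nabla(\cdot)\|_{L^2}^2$ plus an $L^2$-energy term weighted by a \emph{fourth} power of the elliptic data, and this is exactly what forces the factor $(\|\bar v\|_{L^2(\Omega)}^2+\|\phi_1\|_{H^2(\Omega)}^2)^2$ in \eqref{id17} rather than a merely quadratic one. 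A secondary point is the structural role of the weight $K_1\gg1$, needed so that the $u$-dissipation dominates the $\nabla u\cdot\nabla v$ coupling whose coefficient $\bar v^*/(1+\bar u^*)$ is bounded but only $O(1)$.
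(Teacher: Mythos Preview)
Your proposal is correct and follows essentially the same route as the paper: test \eqref{id13} with $u$ and \eqref{id14} with $v$, use the bounds \eqref{uv-bar} on the coefficients, estimate the drift terms via H\"older with the splitting $L^2\times L^3\times L^6$, the interpolation $\|w\|_{L^3}\le C\|w\|_{L^2}^{1/2}\|w\|_{H^1}^{1/2}$, and a two-stage Young inequality that produces the quartic factor $(\|\bar v\|_{L^2}^2+\|\phi_1\|_{H^2}^2)^2$; then choose $K_1$ large (depending on $M_0$) so that the $\nabla u\cdot\nabla v$ coupling, bounded via $|\bar v^*/(1+\bar u^*)|\le1$ and Young as $\beta_2\|\nabla v\|_{L^2}^2+C(\beta_2)\|\nabla u\|_{L^2}^2$, is absorbed. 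The only cosmetic difference is that the paper multiplies \eqref{id13} by $u$ and scales by $K_1$ at the end, whereas you test directly with $K_1u$; the choices of small parameters and the final bookkeeping are otherwise identical.
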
\\
\begin{proof}
Multiply (\ref{id13}) by $u$ and integrate it over $\Omega$. Then using integration by parts and (\ref {id15}), we get
\begin{align}
\frac{1}{2} \frac{d}{dt} \int_\Omega u^2 dx = -\int_\Omega (|\nabla u|^2 - v\nabla\bar{\phi}\cdot\nabla u) dx, \label{id18}
\end{align}
In order to estimate the last term on the right hand side of (\ref{id18}), we need the interpolation inequality
\begin{align}
\|v\|_{L^3(\Omega)} \leq C \|v\|_{L^2(\Omega)}^{1/2} \|v\|_{H^1(\Omega)}^{1/2} \label{id19-1}
\end{align}
and Sobolev embedding theorem with the estimate for Poisson's
equation~\cite{GT1983}
\begin{align}
\|\nabla \bar{\phi}\|_{L^6(\Omega)}^2 \leq C \|\bar{\phi}\|_{H^2(\Omega)}^2 \leq C \left(\|\bar{v}\|_{L^2(\Omega)}^2 + \|\phi_1\|_{H^2(\Omega)}^2\right) \,.\label{id19-2}
\end{align}
For convenience, we use the same notation $C$ for a constant, which only depends on $\Omega$. Then using (\ref{id19-1}), (\ref{id19-2}), H\"{o}lder's and Young's inequalities, we have
\begin{align}
\Big|\int_\Omega (v\nabla\bar{\phi}\cdot\nabla u) dx\Big| &\leq \|\nabla u\|_{L^2(\Omega)} \|v\|_{L^3(\Omega)} \|\nabla \bar{\phi}\|_{L^6(\Omega)} \notag\\
&\leq C \|\nabla u\|_{L^2(\Omega)} \|v\|_{L^2(\Omega)}^{1/2} \|v\|_{H^1(\Omega)}^{1/2} \|\nabla \bar{\phi}\|_{L^6(\Omega)} \notag\\
&\leq \beta_1 \|\nabla u\|_{L^2(\Omega)}^2 + C(\beta_1) \|v\|_{L^2(\Omega)} \|v\|_{H^1(\Omega)} \|\nabla \bar{\phi}\|_{L^6(\Omega)}^2 \notag\\
&\leq \beta_1 \|\nabla u\|_{L^2(\Omega)}^2 + \beta_1 \|v\|_{H^1(\Omega)}^2 + C(\beta_1) \|v\|_{L^2(\Omega)}^2 \|\nabla \bar{\phi}\|_{L^6(\Omega)}^4 \notag\\
&= \beta_1 \|\nabla u\|_{L^2(\Omega)}^2 + \beta_1 \|\nabla v\|_{L^2(\Omega)}^2 + \|v\|_{L^2(\Omega)}^2 \left(\beta_1+C(\beta_1)\|\nabla \bar{\phi}\|_{L^6(\Omega)}^4\right) \notag\\
&\leq \beta_1 \|\nabla u\|_{L^2(\Omega)}^2 + \beta_1 \|\nabla v\|_{L^2(\Omega)}^2\notag\\[-0.8em]
& \label{id19}\\[-0.8em]
&\ \ \quad + \|v\|_{L^2(\Omega)}^2 \left(\beta_1+C(\beta_1) (\|\bar{v}\|_{L^2(\Omega)}^2 + \|\phi_1\|_{H^2(\Omega)}^2)^2\right)\notag
\end{align}
for ${{\beta }_{1}}>0$, where $C\left( {{\beta }_{1}} \right)>0$ is a constant  depending on ${{\beta }_{1}}$ and $\Omega$. Consequently,
\begin{align}
\frac{1}{2} \frac{d}{dt} \int_\Omega u^2 dx &\leq -(1-\beta_1) \|\nabla u\|_{L^2(\Omega)}^2 + \beta_1 \|\nabla v\|_{L^2(\Omega)}^2 \notag\\
&\ \ \quad+ \|v\|_{L^2(\Omega)}^2 \left(\beta_1+C(\beta_1) (\|\bar{v}\|_{L^2(\Omega)}^2 + \|\phi_1\|_{H^2(\Omega)}^2)^2\right). \label{n1}
\end{align}

As for (\ref{id18}), we multiply (\ref{id14}) by $v$ and integrate it over $\Omega$. Then we may use integration by parts and (\ref{id15})-(\ref{id16}) to get
\begin{align}
&\quad\ \frac{1}{2} \frac{d}{dt} \int_\Omega v^2 dx \label{id20}\\
&= -\int_\Omega \left\{\frac{1}{1+\bar{u}^*} (|\nabla v|^2 - u\nabla\bar{\phi}\cdot\nabla v) + \frac{\bar{v}^*}{1+\bar{u}^*} (\nabla u \cdot \nabla v - v\nabla \bar{\phi}\cdot\nabla v)\right\} dx. \notag
\end{align}
Notice that from~(\ref{uv-bar}),
\begin{align*}
0 < \frac{1}{1+10M_0} \leq \frac{1}{1+\bar{u}^*} \leq 1,\quad \left|\frac{\bar{v}^*}{1+\bar{u}^*}\right| \leq 1\,,
\end{align*}
which implies
\begin{align*}
\int_\Omega \frac{1}{1+\bar{u}^*} |\nabla v|^2 dx \geq \int_\Omega \frac{1}{1+10M_0} |\nabla v|^2 dx\,. 
\end{align*}
Besides, we may use Young's inequality to get
\begin{align*}
\Big|\int_\Omega \frac{\bar{v}^*}{1+\bar{u}^*} \nabla u \cdot \nabla v dx\Big| &\leq \int_\Omega |\nabla u \cdot \nabla v| dx \\
&\leq \beta_2 \|\nabla v\|_{L^2(\Omega)}^2 + C(\beta_2) \|\nabla u\|_{L^2(\Omega)}^2\, 
\end{align*}
for $\beta_2>0$, where $C\left( {{\beta }_{2}} \right)>0$ is a constant depending on ${{\beta }_{2}}$ and $\Omega$. On the other hand, as for (\ref{id19}), we have
\begin{align*}
\Big|\int_\Omega \frac{1}{1+\bar{u}^*} u\nabla\bar{\phi}\cdot\nabla v dx\Big| &\leq \int_\Omega |u\nabla\bar{\phi}\cdot\nabla v| dx \\
&\leq \beta_3 \|\nabla v\|_{L^2(\Omega)}^2 + \beta_3 \|\nabla u\|_{L^2(\Omega)}^2 \\
&\quad \ \ + \|u\|_{L^2(\Omega)}^2 \left(\beta_3+C(\beta_3) (\|\bar{v}\|_{L^2(\Omega)}^2 + \|\phi_1\|_{H^2(\Omega)}^2)^2\right),
\end{align*}
and
\begin{align*}
\Big|\int_\Omega \frac{\bar{v}^*}{1+\bar{u}^*} v\nabla\bar{\phi}\cdot\nabla v dx\Big| &\leq \int_\Omega |v\nabla\bar{\phi}\cdot\nabla v| dx \\
&\leq 2\beta_4 \|\nabla v\|_{L^2(\Omega)}^2 \\
& \hspace{0.5cm} + \|v\|_{L^2(\Omega)}^2 \left(\beta_4+C(\beta_4) (\|\bar{v}\|_{L^2(\Omega)}^2 + \|\phi_1\|_{H^2(\Omega)}^2)^2\right). 
\end{align*}
for $\beta_j>0, j=3,4$, where $C\left( {{\beta }_{j}} \right)>0$ is a constant depending on $\beta_j$ and $\Omega$. Hence
\begin{align}
\frac{1}{2} \frac{d}{dt} \int_\Omega v^2 dx &\leq -\left(\frac{1}{1+10M_0} - \beta_2 - \beta_3 - 2\beta_4\right) \|\nabla v\|_{L^2(\Omega)}^2 + \left(C(\beta_2)+\beta_3 \right) \|\nabla u\|_{L^2(\Omega)}^2 \notag\\
&\ \ \quad+ \|u\|_{L^2(\Omega)}^2 \left(\beta_3+C(\beta_3) (\|\bar{v}\|_{L^2(\Omega)}^2 + \|\phi_1\|_{H^2(\Omega)}^2)^2\right) \label{n2}\\
&\ \ \quad+ \|v\|_{L^2(\Omega)}^2 \left(\beta_4+C(\beta_4) (\|\bar{v}\|_{L^2(\Omega)}^2 + \|\phi_1\|_{H^2(\Omega)}^2)^2\right)\,, \notag
\end{align}
for $\beta_j>0, j=2,3,4$.

Combine (\ref{n1}) and (\ref{n2}) and then we get
\begin{align*}
{{K}_{1}}\left( \frac{1}{2}-{{\beta }_{1}} \right)-(C({{\beta }_{2}})+{{\beta }_{3}})\ge 0\,
\end{align*}
for sufficiently large $K_1$  and  sufficiently small $\beta_i$'s,
furthermore, by letting $\beta_1 = K_1^{-2}$, ${{\beta }_{2}}=\frac{1}{4\left( 1+10{{M}_{0}} \right)}$
and choosing $\beta_3, \beta_4$ small enough and $K_1$ large enough, we have that
\begin{align*}
\frac{1}{2\left( 1+10{{M}_{0}} \right)}-{{\beta }_{2}}-{{\beta }_{3}}-2{{\beta }_{4}}\ge {{K}_{1}}{{\beta }_{1}}\,.
\end{align*}
Then we obtain that
\begin{align}
&\quad\ \ \frac{1}{2}\frac{d}{dt} \int_\Omega \left(K_1 u^2 + v^2\right) dx + \frac{1}{2}\int_\Omega \left(K_1|\nabla u|^2 + \frac{1}{1+10M_0}|\nabla v|^2\right) dx \notag \\
&\leq \|u\|_{L^2(\Omega)}^2 \left(\beta_3+C(\beta_3) (\|\bar{v}\|_{L^2(\Omega)}^2 + \|\phi_1\|_{H^2(\Omega)}^2)^2\right) \label{id17-1}\\
&\quad\quad+ \|v\|_{L^2(\Omega)}^2 \left((K_1 \beta_1 + \beta_4) + (K_1 C(\beta_1) + C(\beta_4)) (\|\bar{v}\|_{L^2(\Omega)}^2 + \|\phi_1\|_{H^2(\Omega)}^2)^2\right). \notag
\end{align}
Note that choices of $K_1$ and $\beta_i$'s depend on $M_0$ and $\Omega$.

Therefore, by (\ref{id17-1}), we may get (\ref{id17}) and complete the proof of Lemma~\ref{lem1} by setting
\begin{align*}
&\gamma = \min\left\{K_1,\frac{1}{1+10M_0} \right\}, \\
&K_2 = 2\max\left\{\frac{\beta_3}{K_1},\frac{C(\beta_3)}{K_1},K_1 \beta_1 + \beta_4, K_1 C(\beta_1) + C(\beta_4)\right\}.
\end{align*}
\end{proof}

Now, we consider the weak solution of (\ref{id13})--(\ref{id16-2}), which satisfies
\begin{align}
&\int_{\Omega }{{{u}_{t}}wdx}+\int_{\Omega }{(\nabla u-v\nabla \bar{\phi })}\cdot \nabla wdx=0, \label{eq-wsol-u}\\
&\int_{\Omega }{{{v}_{t}}wdx}+\int_{\Omega }\left(\frac{1}{1+{{{\bar{u}}}_{*}}}(\nabla v-u\nabla \bar{\phi })+\frac{{{{\bar{v}}}_{*}}}{1+{{{\bar{u}}}_{*}}}(\nabla u-v\nabla \bar{\phi })\right)\cdot \nabla wdx=0 \,, \label{eq-wsol-v}
\end{align}
for $w\in {{H}^{1}}\left( \Omega  \right)$. There is no boundary integral terms in the weak forms (\ref{eq-wsol-u}) and (\ref{eq-wsol-v}) because of the natural boundary conditions (\ref{id15}) and (\ref{id16}) for (\ref{id13})--(\ref{id16-2}).
We now apply Galerkin's method (cf.~ Section 4--5 of Chapter III of~\cite{LSU1968}) to find the approximate solution of (\ref{eq-wsol-u})-(\ref{eq-wsol-v}) in the form of
\begin{align*}
u^m(x,t) &= \sum_{k=1}^m a_k^m(t) w_k(x) \\
v^m(x,t) &= \sum_{k=1}^m b_k^m(t) w_k(x),
\end{align*}
satisfying
\begin{align}
\int_\Omega u_t^m w_k dx &+ \int_\Omega (\nabla u^m - v^m\nabla \bar{\phi})\cdot\nabla w_k dx = 0, \label{eqn-um}\\
\int_\Omega v_t^m w_k dx &+ \int_\Omega \Big(\frac{1}{1+\bar{u}_*} (\nabla v^m - u^m \nabla \bar{\phi}) + \frac{\bar{v}_*}{1+\bar{u}_*} (\nabla u^m - v^m \nabla \bar{\phi})\Big) \cdot\nabla w_k dx = 0 \label{eqn-vm}
\end{align}
for $k = 1,2,...,m$ and $m\in \mathbb{N}$, where $\{w_k\}_{k=1}^\infty$ is an orthogonal basis of $H^1(\Omega)$ and an orthonormal basis of $L^2(\Omega)$. Hence the coefficients $a_k^m(t) = \int_\Omega u^m w_k dx$ and $b_k^m(t) = \int_\Omega v^m w_k dx$ can be determined by
\begin{align}
\frac{d}{dt}a_k^m(t) &+ \int_\Omega (\nabla u^m - v^m\nabla \bar{\phi})\cdot\nabla w_k dx = 0, \label{eqn-ak}\\
\frac{d}{dt}b_k^m(t) &+ \int_\Omega \Big(\frac{1}{1+\bar{u}_*} (\nabla v^m - u^m \nabla \bar{\phi}) + \frac{\bar{v}_*}{1+\bar{u}_*} (\nabla u^m - v^m \nabla \bar{\phi})\Big) \cdot\nabla w_k dx = 0 \label{eqn-bk}
\end{align}
for $t>0$ and
\begin{align*}
a_k^m(0) &= \int_\Omega u_0 w_k dx, \\
b_k^m(0) &= \int_\Omega v_0 w_k dx,
\end{align*}
for $k = 1, 2, \ldots, m$. (\ref{eqn-ak}) and (\ref{eqn-bk}) may form a system of ordinary differential equations so we may get the existence and uniqueness of $a_k^m$ and $b_k^m$ by the standard theorems of ordinary differential equations.

Multiply (\ref{eqn-um}), (\ref{eqn-vm}) by $a_k^m$, $b_k^m$, respectively,
and add them together for $k = 1,2,...,m$. Then we get
\begin{align*}
\frac{1}{2} \frac{d}{dt} \int_\Omega (u^m)^2 dx = -\int_\Omega (|\nabla u^m|^2 - v^m\nabla\bar{\phi}\cdot\nabla u^m) dx 
\end{align*}
and
\begin{align*}
&\quad\ \frac{1}{2} \frac{d}{dt} \int_\Omega (v^m)^2 dx \\ 
&= -\int_\Omega \left\{\frac{1}{1+\bar{u}^*} (|\nabla v^m|^2 - u^m\nabla\bar{\phi}\cdot\nabla v^m) + \frac{\bar{v}^*}{1+\bar{u}^*} (\nabla u^m \cdot \nabla v^m - v^m\nabla \bar{\phi}\cdot\nabla v^m)\right\} dx\,, 
\end{align*}
which have the same forms as (\ref{id18}) and (\ref{id20}), respectively. Then by the same argument of Lemma~\ref{lem1}, we have
\begin{align*}
&\frac{d}{dt} \int_\Omega (K_1 (u^m)^2 + (v^m)^2) dx + \gamma \int_\Omega (|\nabla u^m|^2 + |\nabla v^m|^2) dx \\
&\quad\quad\leq K_2 \left\{\int_\Omega (K_1 (u^m)^2 + (v^m)^2) dx\right\}\left\{1 + (\|\bar{v}\|_{L^2(\Omega)}^2 + \|\phi_1\|_{H^2(\Omega)}^2)^2\right\}\,,
\end{align*}
where $K_1, K_2$ and $\gamma$ are positive constants independent of $m$. This implies that by Gronwall's inequality,  $\left\{ {u^m} \right\}_{m=1}^{\infty }$ and $\left\{ {v^m} \right\}_{m=1}^{\infty }$ are uniformly bounded in $L^\infty((0,t_1);L^2(\Omega)) \cap L^2((0,t_1);H^1(\Omega))$. Therefore, we may find the solution of (\ref{eq-wsol-u})-(\ref{eq-wsol-v}) by setting $m\to \infty $ (up to a subsequence).

For the uniqueness of (\ref{id13})--(\ref{id16-2}), we may assume that $(u_1,v_1)$ and $(u_2,v_2)$ are solutions of (\ref{id13})--(\ref{id16-2}). Then $(u_1-u_2,v_1-v_2)$ is a solution of (\ref{id13})--(\ref{id16}) with zero initial data. By Lemma~\ref{lem1} and Gronwall's inequality, we have
\begin{align*}
\int_\Omega (K_1 (u_1 - u_2)^2 + (v_1-v_2)^2) dx \leq 0,
\end{align*}
which implies $u_1 \equiv u_2$, $v_1 \equiv v_2$. Hence (\ref{id13})--(\ref{id16-2}) have a unique solution. Equivalently, (\ref{id9})--(\ref{id12}) with initial data (\ref{id7})--(\ref{id8}) is uniquely solvable.

Therefore, $F$ is well-defined.

Now we claim the continuity of $F$ as follows: \\
\begin{lem} \label{lem2}
The map $F:X\to X$ defined at (\ref{F-mp1}) is continuous.
\end{lem}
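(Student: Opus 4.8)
The plan is to establish sequential continuity. Let $(\bar c_n^{(j)},\bar c_p^{(j)})\to(\bar c_n,\bar c_p)$ in $X$; passing to the variables of (\ref{id13})--(\ref{id16-2}), denote by $(u^{(j)},v^{(j)})$ and $(u,v)$ the corresponding solutions produced by $F$, and put $U^{(j)}=u^{(j)}-u$, $V^{(j)}=v^{(j)}-v$, so that $U^{(j)}(\cdot,0)=V^{(j)}(\cdot,0)=0$. First I would collect the convergence of the ``coefficient data''. Since the truncation $s\mapsto\min\{\max\{s,0\},5M_0\}$ is $1$-Lipschitz, $\bar u^{*(j)}\to\bar u^*$ and $\bar v^{*(j)}\to\bar v^*$ in $L^4((0,t_1);L^2(\Omega))$, and, being uniformly bounded by $10M_0$ (see (\ref{uv-bar})), the rational coefficients $\frac{1}{1+\bar u^{*(j)}}-\frac{1}{1+\bar u^*}$ and $\frac{\bar v^{*(j)}}{1+\bar u^{*(j)}}-\frac{\bar v^*}{1+\bar u^*}$ are bounded by a constant and tend to $0$ in measure on $(0,t_1)\times\Omega$. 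For the potentials, $\Delta(\bar\phi^{(j)}-\bar\phi)=\bar v^{(j)}-\bar v$ with homogeneous Robin boundary data, so by (\ref{id19-2}), $\|\bar\phi^{(j)}-\bar\phi\|_{H^2(\Omega)}\le C\|\bar v^{(j)}-\bar v\|_{L^2(\Omega)}$, hence $\|\nabla(\bar\phi^{(j)}-\bar\phi)\|_{L^6(\Omega)}^2\le C\|\bar v^{(j)}-\bar v\|_{L^2(\Omega)}^2\to0$ in $L^2(0,t_1)$.

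Next I would subtract the two systems. A short computation of the type $a_jb_j-ab=a_j(b_j-b)+(a_j-a)b$ shows that $(U^{(j)},V^{(j)})$ solves a linear parabolic system whose homogeneous part has exactly the coefficient structure of (\ref{id13})--(\ref{id14}) with the data $\bar u^{*(j)},\bar v^{*(j)},\bar\phi^{(j)}$ — which obey the same bounds (\ref{uv-bar}) — plus a divergence-form forcing $(\nabla\cdot f_j,\nabla\cdot g_j)$ whose ingredients involve only the \emph{fixed} limit functions $u,v,\bar\phi$ and the data differences: terms of the type $w\,\nabla(\bar\phi^{(j)}-\bar\phi)$ with $w\in\{u,v\}$, and $\big(\frac{1}{1+\bar u^{*(j)}}-\frac{1}{1+\bar u^*}\big)(\nabla v-u\nabla\bar\phi)$, $\big(\frac{\bar v^{*(j)}}{1+\bar u^{*(j)}}-\frac{\bar v^*}{1+\bar u^*}\big)(\nabla u-v\nabla\bar\phi)$. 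Running the energy computation of Lemma~\ref{lem1} verbatim on $(U^{(j)},V^{(j)})$ — test with $U^{(j)}$, $V^{(j)}$, integrate by parts, use (\ref{id19-1}), (\ref{id19-2}) and Young's inequality, choosing the absorption parameters small enough to keep room in the dissipation — and treating the new forcing pairings $\int_\Omega f_j\cdot\nabla U^{(j)}\,dx$, $\int_\Omega g_j\cdot\nabla V^{(j)}\,dx$ as error terms, one obtains
\begin{align*}
&\frac{d}{dt}\int_\Omega\big(K_1(U^{(j)})^2+(V^{(j)})^2\big)\,dx+\frac{\gamma}{2}\int_\Omega\big(|\nabla U^{(j)}|^2+|\nabla V^{(j)}|^2\big)\,dx\\
&\qquad\le K_2\Big(1+(\|\bar v^{(j)}\|_{L^2(\Omega)}^2+\|\phi_1\|_{H^2(\Omega)}^2)^2\Big)\int_\Omega\big(K_1(U^{(j)})^2+(V^{(j)})^2\big)\,dx+R_j(t),
\end{align*}
where $R_j(t)\ge0$ collects the forcing contributions after absorption of gradient terms on the left. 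Because $\bar v^{(j)}\to\bar v$ in $L^4((0,t_1);L^2(\Omega))$, the prefactor of the linear term is bounded in $L^1(0,t_1)$ uniformly in $j$; since the initial data vanish, Gronwall's inequality gives $\sup_{0<t<t_1}\int_\Omega\big(K_1(U^{(j)})^2+(V^{(j)})^2\big)\,dx\le C\int_0^{t_1}R_j(t)\,dt$ with $C$ independent of $j$.

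It remains — and this is where the real work lies — to show $\int_0^{t_1}R_j(t)\,dt\to0$, using that $(u,v)\in L^\infty((0,t_1);L^2(\Omega))\cap L^2((0,t_1);H^1(\Omega))$, established in the construction of $F$. For the potential-difference contributions one argues as in (\ref{id19}): $\int_\Omega|w||\nabla(\bar\phi^{(j)}-\bar\phi)||\nabla U^{(j)}|\,dx\le\beta\|\nabla U^{(j)}\|_{L^2}^2+C(\beta)\|w\|_{L^2}\|w\|_{H^1}\|\nabla(\bar\phi^{(j)}-\bar\phi)\|_{L^6}^2$ (by (\ref{id19-1}), H\"{o}lder, Young), so that the time integral $\int_0^{t_1}\|w\|_{L^2}\|w\|_{H^1}\|\nabla(\bar\phi^{(j)}-\bar\phi)\|_{L^6}^2\,dt\to0$ by Cauchy--Schwarz in $t$, since $t\mapsto\|w\|_{L^2}\|w\|_{H^1}$ lies in $L^2(0,t_1)$ ($\|w\|_{L^2}\in L^\infty_t$, $\|w\|_{H^1}\in L^2_t$) while $t\mapsto\|\nabla(\bar\phi^{(j)}-\bar\phi)\|_{L^6}^2\to0$ in $L^2(0,t_1)$. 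For the coefficient-difference contributions, after absorbing the gradient of $U^{(j)}$ or $V^{(j)}$ one is left with $C\int_0^{t_1}\!\!\int_\Omega|\theta_j|\big(|\nabla v-u\nabla\bar\phi|^2+|\nabla u-v\nabla\bar\phi|^2\big)\,dx\,dt$, where $\theta_j$ is the relevant coefficient difference; the weight $|\nabla v-u\nabla\bar\phi|^2+|\nabla u-v\nabla\bar\phi|^2$ is a \emph{fixed} $L^1((0,t_1)\times\Omega)$ function (again by (\ref{id19-1}), (\ref{id19-2}) and the energy class of $(u,v)$), while $|\theta_j|$ is bounded by a constant and $\to0$ in measure, so this integral $\to0$ by dominated convergence. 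Collecting, $\sup_{0<t<t_1}\|(U^{(j)},V^{(j)})\|_{L^2(\Omega)}^2\to0$; on the finite interval this forces $(U^{(j)},V^{(j)})\to0$ in $L^4((0,t_1);L^2(\Omega))^2$, and since $c_n=\frac{u+v}{2}$, $c_p=\frac{u-v}{2}$, we conclude $F((\bar c_n^{(j)},\bar c_p^{(j)}))\to F((\bar c_n,\bar c_p))$ in $X$. The main obstacle is the bookkeeping in $R_j$: arranging that every forcing product lands in $L^1((0,t_1)\times\Omega)$ carrying a genuinely vanishing factor rather than only a bounded one; the differential-inequality part is a direct copy of Lemma~\ref{lem1}.
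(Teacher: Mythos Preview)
Your proposal is correct and follows essentially the same route as the paper: subtract the two systems, run the Lemma~\ref{lem1} energy computation on the differences $(U^{(j)},V^{(j)})$, handle the potential-difference forcings via (\ref{id19-1})--(\ref{id19-2}) together with Cauchy--Schwarz in $t$, and close with Gronwall from zero initial data. The only cosmetic difference is in the coefficient-difference terms: the paper performs an explicit level-set splitting $\{|\theta_j|\le\sigma\}\cup\{|\theta_j|>\sigma\}$, shows the bad set has vanishing measure, and then sends $\sigma\to0$, which is precisely the hands-on proof of the dominated-convergence step you invoke.
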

\begin{proof}
Let $\left\{ ({{{\bar{c}}}_{n,k}},{{{\bar{c}}}_{p,k}}) \right\}_{k=1}^{\infty }\subset X$ and  $({{\bar{c}}_{n}},{{\bar{c}}_{p}})\in X$ such that $(\bar{c}_{n,k},\bar{c}_{p,k}) \rightarrow (\bar{c}_n,\bar{c}_p)$ in $X$ as $k\rightarrow \infty$. Let $(c_{n,k},c_{p,k})= F((\bar{c}_{n,k},\bar{c}_{p,k})) $ for $k\in \mathbb{N}$ and $ (c_n,c_p)= F((\bar{c}_n,\bar{c}_p)) $.

Claim that $(c_{n,k},c_{p,k})\to (c_n,c_p)$ as $k\rightarrow \infty$. As for (\ref{id13}) and (\ref{id14}), we may set
\begin{align*}
& u_k = c_{n,k} + c_{p,k}\,, & v_k = c_{n,k} -c_{p,k}\,, \\
& \bar{u}_k = \bar{c}_{n,k} + \bar{c}_{p,k}\,,  & \bar{v}_k = \bar{c}_{n,k} - \bar{c}_{p,k}\,, \\
& \bar{u}_k^* = \bar{c}_{n,k}^* +\bar{c}_{p,k}^*\,,  & \bar{v}_k^* = \bar{c}_{n,k}^* - \bar{c}_{p,k}^*\,, \\
\end{align*}
and
\begin{align*}
& u = c_n + c_p\,,  & v = c_n - c_p\,, \\
& \bar{u} = \bar{c}_n + \bar{c}_p\,,  & \bar{v} = \bar{c}_n - \bar{c}_p\,, \\
& \bar{u}^* = \bar{c}_n^* + \bar{c}_p^*\,, & \bar{v}^* = \bar{c}_n^* -\bar{c}_p^*\,.
\end{align*}
Then as for (\ref{id13})--(\ref{id16-2}), $\left(u_k, v_k\right)$ satisfies
\begin{align}
\frac{\partial u_k}{\partial t} &= \nabla \cdot (\nabla u_k - v_k \nabla \bar{\phi}_k), \label{id28} \\
\frac{\partial v_k}{\partial t} &= \nabla \cdot \Big(\frac{1}{1+\bar{u}_k^*} (\nabla v_k - u_k \nabla \bar{\phi}_k) + \frac{\bar{v}_k^*}{1+\bar{u}_k^*} (\nabla u_k - v_k \nabla \bar{\phi}_k)\Big) \label{id29}
\end{align}
with boundary conditions
\begin{align}
(\nabla u_k - v_k \nabla \bar{\phi}_k)\cdot\nu &= 0, \label{id32} \\
(\nabla v_k - u_k \nabla \bar{\phi}_k)\cdot\nu &= 0\,, \label{id33}
\end{align}
and $\left(u, v\right)$ do
\begin{align}
\frac{\partial u}{\partial t} &= \nabla \cdot (\nabla u - v\nabla \bar{\phi}), \label{id30} \\
\frac{\partial v}{\partial t} &= \nabla \cdot \Big(\frac{1}{1+\bar{u}^*} (\nabla v - u\nabla \bar{\phi}) + \frac{\bar{v}^*}{1+\bar{u}^*} (\nabla u - v\nabla \bar{\phi})\Big) \label{id31}
\end{align}
with boundary conditions
\begin{align}
(\nabla u - v\nabla \bar{\phi})\cdot\nu &= 0, \label{id34} \\
(\nabla v - u\nabla \bar{\phi})\cdot\nu &= 0, \label{id35}
\end{align}
and the initial data (\ref{id16-1})--(\ref{id16-2}), where $\bar{\phi}_k$ and $\bar{\phi}$ satisfy $\Delta {{\bar{\phi }}_{k}}={{\bar{c}}_{n,k}}-{{\bar{c}}_{p,k}}={{\bar{v}}_{k}}$ and $\Delta \bar{\phi }={{\bar{c}}_{n}}-{{\bar{c}}_{p}}=\bar{v}$ in $\Omega$, respectively, with the Robin boundary condition (\ref{id6}).

Let ${{\tilde{u}}_{k}}={{u}_{k}}-u$ and ${{\tilde{v}}_{k}}={{v}_{k}}-v$. Then by (\ref{id28})-(\ref{id31}),
we get the system of equations for ${{\tilde{u}}_{k}}$ and ${{\tilde{v}}_{k}}$ as follows:
\BE\label{eq-ti-uk}
\frac{\partial {{{\tilde{u}}}_{k}}}{\partial t}=\nabla \cdot \left( \nabla {{{\tilde{u}}}_{k}}+{{{\tilde{v}}}_{k}}\nabla {{{\bar{\phi }}}_{k}}+v\nabla \left( {{{\bar{\phi }}}_{k}}-\bar{\phi } \right) \right)\,,
\EE
and
\begin{align}
\frac{\partial\tilde{v}_k}{\partial t} = \nabla\cdot\bigg[ &\frac{1}{1+\bar{u}_{k}^{*}}\nabla {{{\tilde{v}}}_{k}}+\left( \frac{1}{1+\bar{u}_{k}^{*}}-\frac{1}{1+{{{\bar{u}}}^{*}}} \right)\nabla v \notag\\
&+\frac{\bar{v}_{k}^{*}}{1+\bar{u}_{k}^{*}}\nabla {{{\tilde{u}}}_{k}}+\left( \frac{\bar{v}_{k}^{*}}{1+\bar{u}_{k}^{*}}-\frac{{{{\bar{v}}}^{*}}}{1+{{{\bar{u}}}^{*}}} \right)\nabla u \label{eq-ti-vk}\\
&-\frac{1}{1+\bar{u}_{k}^{*}}\left( {{u}_{k}}\nabla {{{\bar{\phi }}}_{k}}-u\nabla \bar{\phi } \right)-\left( \frac{1}{1+\bar{u}_{k}^{*}}-\frac{1}{1+{{{\bar{u}}}^{*}}} \right)u\nabla \bar{\phi } \notag\\
&-\frac{\bar{v}_{k}^{*}}{1+\bar{u}_{k}^{*}}\left( {{v}_{k}}\nabla {{{\bar{\phi }}}_{k}}-v\nabla \bar{\phi }\right) -\left( \frac{\bar{v}_{k}^{*}}{1+\bar{u}_{k}^{*}}-\frac{{{{\bar{v}}}^{*}}}{1+{{{\bar{u}}}^{*}}} \right)v\nabla \bar{\phi}\ \ \bigg] \notag
\end{align}
Since (\ref{eq-ti-uk}) and (\ref{eq-ti-vk}) are similar to equations (\ref{id13}) and (\ref{id14}),
we can apply Lemma~\ref{lem1}, and then as for (\ref{n1}) in Lemma~\ref{lem1}, we have that
\begin{align}
\frac{1}{2} \frac{d}{dt} \int_\Omega \tilde{u}_k^2 dx
&\leq -(1-2\tilde{\beta}_1) \|\nabla \tilde{u}_k \|_{L^2(\Omega)}^2 + \tilde{\beta}_1 \|\tilde{v}_k\|_{H^1(\Omega)}^2 \notag\\
&\quad \ \ + C(\tilde{\beta}_1) \|\tilde{v}_k\|_{L^2(\Omega)}^2 (\|\bar{v}_k \|_{L^2(\Omega)}+\|\phi_1 \|_{H^2(\Omega)})^4  \label{id36-1}\\
&\quad \ \ + C(\tilde{\beta}_1) \|v\|_{L^2(\Omega)} \|v\|_{H^1(\Omega)} \|\bar{v}_k - \bar{v} \|_{L^2(\Omega)}^2 \notag
\end{align}
for $\tilde{\beta}_1>0$, where $C(\tilde{\beta}_1)>0$ is a constant depending on $\tilde{\beta}_1$ and $\Omega$.
Moreover, (\ref{eq-ti-vk}) gives
\BE\label{id37}
\frac{1}{2} \frac{d}{dt} \int_\Omega \tilde{v}_k^2 dx=I_1 + I_2\,
\EE
where
\begin{align*} {{I}_{1}}=&-\int_{\Omega }\left[ \frac{1}{1+\bar{u}_{k}^{*}}|\nabla {{{\tilde{v}}}_{k}}{{|}^{2}}+\frac{\bar{v}_{k}^{*}}{1+\bar{u}_{k}^{*}}\nabla {{{\tilde{u}}}_{k}}\cdot \nabla {{{\tilde{v}}}_{k}} \right]  \\
&+\int_{\Omega }{\left[ \frac{1}{1+\bar{u}_{k}^{*}}({{u}_{k}}\nabla {{{\bar{\phi }}}_{k}}-u\nabla \bar{\phi })\cdot \nabla {{{\tilde{v}}}_{k}}+\frac{\bar{v}_{k}^{*}}{1+\bar{u}_{k}^{*}}({{v}_{k}}\nabla {{{\bar{\phi }}}_{k}}-v\nabla \bar{\phi })\cdot \nabla {{{\tilde{v}}}_{k}} \right]}\,,  \\
\end{align*}
and
\begin{align*}
{{I}_{2}}=&-\int_{\Omega }{{}}\left[\Big(\frac{1}{1+\bar{u}_{k}^{*}}-\frac{1}{1+{{{\bar{u}}}^{*}}}\Big)\nabla v\cdot \nabla {{{\tilde{v}}}_{k}}+\Big(\frac{\bar{v}_{k}^{*}}{1+\bar{u}_{k}^{*}}-\frac{{{{\bar{v}}}^{*}}}{1+{{{\bar{u}}}^{*}}}\Big)\nabla u\cdot \nabla {{{\tilde{v}}}_{k}}\right] \\
& +\int_{\Omega }{{}}\left[\Big(\frac{1}{1+\bar{u}_{k}^{*}}-\frac{1}{1+{{{\bar{u}}}^{*}}}\Big)u\nabla \bar{\phi }\cdot \nabla{{{\tilde{v}}}_{k}}+\Big(\frac{\bar{v}_{k}^{*}}{1+\bar{u}_{k}^{*}}-\frac{{{{\bar{v}}}^{*}}}{1+{{{\bar{u}}}^{*}}}\Big)v\nabla \bar{\phi }\cdot \nabla {{{\tilde{v}}}_{k}}\right] \,.
\end{align*}

Since we may use the same method in Lemma \ref{lem1} to estimate $I_1$ like (\ref{id36-1}),
one can easily estimate for $I_1$. We omit the detail here.
For $I_2$, we may decompose the domain $\Omega$ into two parts as follows:
$$
\Omega\cap\left\{ \bigg|\frac{1}{1+\bar{u}_k^*} - \frac{1}{1+\bar{u}^*}\bigg| \leq \sigma \right\}\quad\hbox{and}\quad \Omega\cap\left\{ \bigg|\frac{1}{1+\bar{u}_k^*} - \frac{1}{1+\bar{u}^*}\bigg| > \sigma \right\}\quad\hbox{ for }\quad \sigma>0\,.
$$
Fix $\sigma>0$ arbitrarily. Then by Young's inequality, we have that
\begin{align*}
&\quad \ \ \bigg|\int_\Omega \Big(\frac{1}{1+\bar{u}_k^*} - \frac{1}{1+\bar{u}^*}\Big) \nabla v \cdot \nabla \tilde{v}_k dx\bigg| \notag \\
&\leq \tilde{\beta}_5 \int_\Omega |\nabla \tilde{v}_k|^2 dx + C(\tilde{\beta}_5) \int_\Omega \Big(\frac{1}{1+\bar{u}_k^*} - \frac{1}{1+\bar{u}^*}\Big)^2 |\nabla v|^2 dx \\
&\leq \tilde{\beta}_5 \int_\Omega |\nabla \tilde{v}_k|^2 dx + C(\tilde{\beta}_5)\left(\sigma^2 \int_\Omega |\nabla v|^2 dx + \int_{\Omega\cap\{ |\frac{1}{1+\bar{u}_k^*} - \frac{1}{1+\bar{u}^*}| > \sigma \}} |\nabla v|^2 dx\right)\,. \notag
\end{align*}

Similarly, we have that
\begin{align*}
&\bigg|\int_\Omega \Big(\frac{\bar{v}_k^*}{1+\bar{u}_k^*} - \frac{\bar{v}^*}{1+\bar{u}^*}\Big)\nabla u \cdot \nabla \tilde{v}_k dx\bigg| \\
&\leq \tilde{\beta}_6 \int_\Omega |\nabla \tilde{v}_k|^2 dx + C(\tilde{\beta}_6)\left(\sigma^2 \int_\Omega |\nabla u|^2 dx + \int_{\Omega\cap\{ |\frac{\bar{v}_k^*}{1+\bar{u}_k^*} - \frac{\bar{v}^*}{1+\bar{u}^*}| > \sigma \}} |\nabla u|^2 dx\right)\,,
\end{align*}
\begin{align*}
&\bigg|\int_\Omega \Big(\frac{1}{1+\bar{u}_k^*} - \frac{1}{1+\bar{u}^*}\Big) u\nabla\bar{\phi} \cdot\nabla \tilde{v}_k dx\bigg|   \\
&\leq \tilde{\beta}_7 \int_\Omega |\nabla \tilde{v}_k|^2 dx + C(\tilde{\beta}_7)\left(\sigma^2 \int_\Omega u^2 |\nabla \bar{\phi}|^2 dx + \int_{\Omega\cap\{ |\frac{1}{1+\bar{u}_k^*} - \frac{1}{1+\bar{u}^*}| > \sigma \}} u^2 |\nabla \bar{\phi}|^2 dx\right),
\end{align*}
and
\begin{align*}
&\bigg|\int_\Omega \Big(\frac{\bar{v}_k^*}{1+\bar{u}_k^*} - \frac{\bar{v}^*}{1+\bar{u}^*}\Big) v\nabla\bar{\phi} \cdot \nabla \tilde{v}_k dx\bigg|   \\
&\leq \tilde{\beta}_8 \int_\Omega |\nabla \tilde{v}_k|^2 dx + C(\tilde{\beta}_8)\left(\sigma^2 \int_\Omega v^2 |\nabla \bar{\phi}|^2 dx + \int_{\Omega\cap\{ |\frac{\bar{v}_k^*}{1+\bar{u}_k^*} - \frac{\bar{v}^*}{1+\bar{u}^*}| > \sigma \}} v^2 |\nabla \bar{\phi}|^2 dx\right)
\end{align*}
for $\tilde{\beta}_i>0$, where $C(\tilde{\beta}_i)>0$ is a constant depending on $\tilde{\beta}_i$ and $\Omega$, $i=5,6,7,8$. Hence (\ref{id37}) becomes
\begin{align}
&\quad\ \ \frac{1}{2} \frac{d}{dt} \int_\Omega \tilde{v}_k^2 dx \notag\\
&\leq -\left(\frac{1}{1+10M_0} -\tilde{\beta}_2 -2\tilde{\beta}_3 -3\tilde{\beta}_4 -\tilde{\beta}_5 -\tilde{\beta}_6 -\tilde{\beta}_7 -\tilde{\beta}_8\right)\|\nabla \tilde{v}_k \|_{L^2(\Omega)}^2 \notag\\
&\quad\ \ + \left(C(\tilde{\beta}_2) + \tilde{\beta}_3 \right)\|\nabla \tilde{u}_k \|_{L^2(\Omega)}^2 + \tilde{\beta}_4 \|\tilde{v}_k\|_{L^2(\Omega)}^2 \notag\\
&\quad\ \ + \left(C(\tilde{\beta}_3)\|\tilde{u}_k\|_{L^2(\Omega)}^2 + C(\tilde{\beta}_4)\|\tilde{v}_k\|_{L^2(\Omega)}^2 \right)\left(\|\bar{v}_k \|_{L^2(\Omega)}+\|\phi_0 \|_{L^2(\partial\Omega)}\right)^4 \notag\\
&\quad\ \ +\left(C(\tilde{\beta}_3)\|u\|_{L^2(\Omega)}\|u\|_{H^1(\Omega)}+C(\tilde{\beta}_4)\|v\|_{L^2(\Omega)}\|v\|_{H^1(\Omega)}\right)\|\bar{v}_k - \bar{v} \|_{L^2(\Omega)}^2 \label{id37-1}\\
&\quad\ \ + \sigma^2 \int_\Omega \left(C(\tilde{\beta}_6)|\nabla u|^2 + €C(\tilde{\beta}_5)|\nabla v|^2 + C(\tilde{\beta}_7)u^2|\nabla \bar{\phi}|^2 + C(\tilde{\beta}_8)v^2 |\nabla \bar{\phi}|^2 \right) dx \notag\\
&\quad \ \ +\int_{\Omega\cap\{ |\frac{1}{1+\bar{u}_k^*} - \frac{1}{1+\bar{u}^*}| > \sigma \}} \left(C(\tilde{\beta}_5)|\nabla v|^2 + C(\tilde{\beta}_7)u^2 |\nabla \bar{\phi}|^2\right) dx \notag \\
&\quad \ \ + \int_{\Omega\cap\{ |\frac{\bar{v}_k^*}{1+\bar{u}_k^*} - \frac{\bar{v}^*}{1+\bar{u}^*}| > \sigma \}} \left(C(\tilde{\beta}_6)|\nabla u|^2 + C(\tilde{\beta}_8)v^2 |\nabla \bar{\phi}|^2\right) dx. \notag
\end{align}
Combine (\ref{id36-1})--(\ref{id37-1}) and choose suitable $\tilde{K}$
large enough and $\tilde{\beta}_i$'s small enough such that
$
\tilde{K}(1-2\tilde{\beta}_1) - (C(\tilde{\beta}_2) + \tilde{\beta}_3) \geq 0
$
and
$$
\frac{1}{1+10M_0} -\tilde{\beta}_2 -2\tilde{\beta}_3 -3\tilde{\beta}_4 -\tilde{\beta}_5 -\tilde{\beta}_6 -\tilde{\beta}_7 -\tilde{\beta}_8 - \tilde{K}\tilde{\beta}_1 \geq 0.
$$

Set here $\tilde{\beta}_1 = \tilde{K}^{-2}$ and choose sufficiently large $\tilde{K}$ and sufficiently small $\tilde{\beta}_i$'s for $i = 2,3,...,8$ to get such $\tilde{K}$ and $\tilde{\beta}_i$'s. Then we have
\begin{align}
&\quad \ \ \frac{d}{dt} \int_\Omega \left(\tilde{K}\tilde{u}_k^2 + \tilde{v}_k^2\right) dx \notag \\
&\leq C\bigg[\left(1+(\|\bar{v}_k \|_{L^2(\Omega)}+\|\phi_0 \|_{L^2(\partial\Omega)}\right)^4)\int_\Omega \left(\tilde{K}\tilde{u}_k^2 + \tilde{v}_k^2\right) dx \notag \\
&\quad \ \ + \left(\|u\|_{L^2(\Omega)}\|u\|_{H^1(\Omega)}+\|v\|_{L^2(\Omega)}\|v\|_{H^1(\Omega)}\right)\|\bar{v}_k - \bar{v} \|_{L^2(\Omega)}^2 \label{id47}\\
&\quad \ \ + \sigma^2 \int_\Omega \left(|\nabla u|^2 + |\nabla v|^2 + u^2 |\nabla \bar{\phi}|^2 + v^2 |\nabla \bar{\phi}|^2\right) dx \notag \\
&\quad \ \ + \int_{\Omega\cap\{ |\frac{1}{1+\bar{u}_k^*} - \frac{1}{1+\bar{u}^*}| > \sigma \}} \left(|\nabla v|^2 + u^2 |\nabla \bar{\phi}|^2\right) dx \notag \\
&\quad \ \ + \int_{\Omega\cap\{ |\frac{\bar{v}_k^*}{1+\bar{u}_k^*} - \frac{\bar{v}^*}{1+\bar{u}^*}| > \sigma \}} \left(|\nabla u|^2 + v^2 |\nabla \bar{\phi}|^2\right) dx\bigg] \notag
\end{align}
for some positive constant $C$ depending only on $M_0$ and $\Omega$. By Gronwall's inequality, (\ref{id47}) implies
\begin{align}
&\quad \ \ \int_\Omega \left(\tilde{K}\tilde{u}_k^2 + \tilde{v}_k^2\right) dx \notag \\
&\leq C\exp\left\{C\int_0^t \big(1+(\|\bar{v}_k \|_{L^2(\Omega)}+\|\phi_0 \|_{L^2(\partial\Omega)})^4\big)dx\right\} \notag \\
&\quad \ \ \cdot \bigg[\int_0^t \left(\|u\|_{L^2(\Omega)}\|u\|_{H^1(\Omega)}+\|v\|_{L^2(\Omega)}\|v\|_{H^1(\Omega)}\right)\|\bar{v}_k - \bar{v} \|_{L^2(\Omega)}^2 ds \label{id48} \\
&\quad \ \ + \sigma^2 \int_{Q_t} \left(|\nabla u|^2 + |\nabla v|^2 + u^2 |\nabla \bar{\phi}|^2 + v^2 |\nabla \bar{\phi}|^2\right) dxds \notag \\
&\quad \ \ + \int_{Q_t \cap\{ |\frac{1}{1+\bar{u}_k^*} - \frac{1}{1+\bar{u}^*}| > \sigma \}} \left(|\nabla v|^2 + u^2 |\nabla \bar{\phi}|^2\right) dxds \notag \\
&\quad \ \ + \int_{Q_t \cap\{ |\frac{\bar{v}_k^*}{1+\bar{u}_k^*} - \frac{\bar{v}^*}{1+\bar{u}^*}| > \sigma \}} \left(|\nabla u|^2 + v^2 |\nabla \bar{\phi}|^2\right) dxds\bigg], \notag
\end{align}
where $Q_t := \Omega\times (0,t)$. Notice that
\begin{align*}
&\quad \ \ \int_0^{t_1}
\left(\|u\|_{L^2(\Omega)}\|u\|_{H^1(\Omega)}+\|v\|_{L^2(\Omega)}\|v\|_{H^1(\Omega)}\right)\|\bar{v}_k
- \bar{v} \|_{L^2(\Omega)}^2 ds \notag \\
&\leq \left(\|u\|_{L^\infty((0,t_1);L^2(\Omega))}\|u\|_{L^2((0,t_1);H^1(\Omega))}\right. \\
&\quad \ \ \left.+\|v\|_{L^\infty((0,t_1);L^2(\Omega))}\|v\|_{L^2((0,t_1);H^1(\Omega))}\right) \cdot \|\bar{v}_k - \bar{v}\|_{L^4((0,t_1);L^2(\Omega))}^2 \notag \\
&\rightarrow 0 \notag
\end{align*}
as $k\rightarrow\infty$. Using the following inequalities
\begin{align*}
\Big|\frac{1}{1+\bar{u}_k^*} - \frac{1}{1+\bar{u}^*}\Big| &= \frac{1}{(1+\bar{u}_k^*)(1+\bar{u}^*)} |\bar{u}_k^* - \bar{u}^*| \\
&\leq |\bar{u}_k^* - \bar{u}^*| \\
&\leq |\bar{c}_{n,k}^* - \bar{c}_n^*| + |\bar{c}_{p,k}^* - \bar{c}_p^*| \\
&\leq |\bar{c}_{n,k} - \bar{c}_n| + |\bar{c}_{p,k} - \bar{c}_p|
\end{align*}
and
\begin{align*}
\Big|\frac{\bar{v}_k^*}{1+\bar{u}_k^*} - \frac{\bar{v}^*}{1+\bar{u}^*}\Big| &= \frac{1}{(1+\bar{u}_k^*)(1+\bar{u}^*)} |(1+\bar{u}^*)\bar{v}_k^* - (1+\bar{u}_k^*)\bar{v}^*| \\
&\leq \frac{1}{(1+\bar{u}_k^*)}|\bar{v}_k^* - \bar{v}^*| + \frac{|\bar{v}^*|}{(1+\bar{u}_k^*)(1+\bar{u}^*)}|\bar{u}_k^* - \bar{u}^*| \\
&\leq |\bar{v}_k^* - \bar{v}^*| + |\bar{u}_k^* - \bar{u}^*| \\
&\leq 2(|\bar{c}_{n,k}^* - \bar{c}_n^*| + |\bar{c}_{p,k}^* - \bar{c}_p^*|) \\
&\leq 2(|\bar{c}_{n,k} - \bar{c}_n| + |\bar{c}_{p,k} - \bar{c}_p|)\,,
\end{align*}
we have
\begin{align*}
&\Big|Q_{t_1} \cap\Big\{\Big|\frac{1}{1+\bar{u}_k^*} - \frac{1}{1+\bar{u}^*}\Big| > \sigma\Big\}\Big| \rightarrow 0, \\
&\Big|Q_{t_1} \cap\Big\{\Big|\frac{\bar{v}_k^*}{1+\bar{u}_k^*} - \frac{\bar{v}^*}{1+\bar{u}^*}\Big| > \sigma\Big\}\Big| \rightarrow 0
\end{align*}
as $k \rightarrow \infty$.

Therefore, (\ref{id48}) implies that
\begin{align*}
& \limsup_{k\rightarrow\infty} \sup_{t\in(0,t_1)} \int_\Omega \left(\tilde{K}\tilde{u}_k^2 + \tilde{v}_k^2\right) dx \notag \\
&\leq C\sigma^2 \exp\left\{C\int_0^{t_1} \big(1+(\|\bar{v}\|_{L^2(\Omega)}+\|\phi_1 \|_{H^2(\Omega)})^4\big)dx\right\} \\
&\quad \ \ \cdot \int_{Q_{t_1}} \left(|\nabla u|^2 + |\nabla v|^2 + u^2|\nabla \bar{\phi}|^2 + v^2|\nabla \bar{\phi}|^2\right) dxdt\,. \notag
\end{align*}
In the derivation, we have used the assumption that $\bar{v}_k \rightarrow \bar{v}$ in ${L^4((0,t_1);L^2(\Omega))}$ as $k \rightarrow \infty$, and we complete the proof by letting $\sigma\to 0$.
\end{proof}

In order to use Schauder's fixed point theorem, we want to find a ball $B_R(0) =\{(f,g)\in X: \|(f,g)\|_X \leq R\}$ such that $B_R(0)$ is invariant under $F$ i.e., $G:=F({{B}_{R}}(0))\subseteq {{B}_{R}}(0)$ and the closure of $G$ is compact in $X$. The existence of such a ball can be proved as follows: \\
By Lemma~\ref{lem1} and Gronwall's inequality, we have
\begin{align}
&\sup_{0 \leq t \leq t_1} \int_\Omega (K_1 u^2 + v^2) dx \notag\\[-0.8em]
&\label{id26}\\[-0.8em]
&\leq \int_\Omega (K_1 u_0^2 + v_0^2) dx \cdot \exp\left\{K_2\int_0^{t_1} (1 + (\|\bar{v}\|_{L^2(\Omega)}^2 + \|\phi_1\|_{H^2(\Omega)}^2)^2) ds\right\}, \notag\\
&\int_{Q_{t_1}} (|\nabla u|^2 + |\nabla v|^2) dxdt\notag\\[-0.8em]
&\label{id27}\\[-0.8em]
&\leq \frac{1}{\gamma} \int_\Omega (K_1 u_0^2 + v_0^2) dx \cdot \left(1+K_2 t_1\exp\bigg\{2K_2\int_0^{t_1} (1 + (\|\bar{v}\|_{L^2(\Omega)}^2 + \|\phi_1\|_{H^2(\Omega)}^2)^2) ds\bigg\}\right),\notag
\end{align}
where $Q_{t_1} := \Omega\times(0,t_1)$. By (\ref{id26})--(\ref{id27}), we may estimate the norms of $u$ and $v$ in spaces $L^\infty((0,t_1);L^2(\Omega))$ and $L^2((0,t_1);H^1(\Omega))$. Moreover, (\ref{id26}) implies
\begin{align*}
\|(c_n,c_p)\|_X &\leq C\left(\int_0^{t_1} \Big(\int_\Omega (K_1 u^2 + v^2) dx\Big)^2 \right)^{1/4} \\
&\leq C_1 t_1^{1/4} \left(\|u_0\|_{L^2(\Omega)}+\|v_0\|_{L^2(\Omega)}\right) \exp\left\{C_2 (\|\bar{v}\|_{L^4((0,t_1);L^2(\Omega))}^4 + t_1(\|\phi_1\|_{H^2(\Omega)}^4 + 1))\right\} \\
&\leq C_1 t_1^{1/4} \left(\|u_0\|_{L^2(\Omega)}+\|v_0\|_{L^2(\Omega)}\right) \exp\left\{C_2 (\|(\bar{c}_n,\bar{c}_p)\|_X^4 + t_1(\|\phi_1\|_{H^2(\Omega)}^4 + 1))\right\}
\end{align*}
which implies that
${{\left\| F\left( {{{\bar{c}}}_{n}},{{{\bar{c}}}_{p}} \right) \right\|}_{X}}={{\left\| \left( {{c}_{n}},{{c}_{p}} \right) \right\|}_{X}}\le R$ if
\begin{align}
C_1 t_1^{1/4} \left(\|u_0\|_{L^2(\Omega)}+\|v_0\|_{L^2(\Omega)}\right) \exp\left\{C_2 (R^4 + t_1(\|\phi_1\|_{H^2(\Omega)}^4 + 1))\right\} \leq R\,, \label{R}
\end{align}
which can be fulfilled by fixing $R>0$ as a constant and letting $t_1>0$ sufficiently small such that
\begin{align*}
C_1 \left(\|u_0\|_{L^2(\Omega)}+\|v_0\|_{L^2(\Omega)}\right) t_1^{1/4} \exp\left\{ C_2 (\|\phi_1\|_{H^2(\Omega)}^4 + 1)) t_1 \right\} \leq R \exp\left\{-C_2 R^4\right\}.
\end{align*}

Therefore, we get the ball ${{B}_{R}}\left( 0 \right)$ as an invariant set of the map $F$.

Claim now that the image of the ball ${{B}_{R}}\left( 0 \right)$, $G := F(B_R(0))$ is precompact in $X$ i.e., the closure of $G$ is compact in $X$ as follows:
\begin{lem} \label{lem3}
The closure of the image $G:=F({{B}_{R}}(0))\subseteq {{B}_{R}}(0)$ of the ball $B_R(0) = \{(f,g):\|(f,g)\|_X \leq R)\}$ is compact in $X$, where $F$ is defined at (\ref{F-mp1}) and $R$ is defined in (\ref{R}) such that $B_R(0)$ is invariant under $F$.
\end{lem}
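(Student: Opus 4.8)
The plan is to prove sequential compactness of $\overline G$ directly. Given any sequence $\{(c_{n,j},c_{p,j})\}_{j\ge1}\subseteq G$, write $(c_{n,j},c_{p,j})=F((\bar c_{n,j},\bar c_{p,j}))$ with $(\bar c_{n,j},\bar c_{p,j})\in B_R(0)$, and, as throughout Section~\ref{sec3}, set $u_j=c_{n,j}+c_{p,j}$, $v_j=c_{n,j}-c_{p,j}$, $\bar v_j=\bar c_{n,j}-\bar c_{p,j}$, together with the cut quantities $\bar u_j^*$, $\bar v_j^*$ and the potential $\bar\phi_j$ solving $\Delta\bar\phi_j=\bar v_j$ in $\Omega$ with (\ref{id6}); then $(u_j,v_j)$ solves the linear system (\ref{id13})--(\ref{id16-2}) with these barred data. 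Since $\|(\bar c_{n,j},\bar c_{p,j})\|_X\le R$, the quantity $\int_0^{t_1}\big(1+(\|\bar v_j\|_{L^2(\Omega)}^2+\|\phi_1\|_{H^2(\Omega)}^2)^2\big)\,ds$ is bounded by a constant depending only on $R$, $t_1$, $\Omega$ and $\|\phi_1\|_{H^2(\Omega)}$. Hence Lemma~\ref{lem1}, Gronwall's inequality and the initial bounds (as in (\ref{id26})--(\ref{id27})) produce a constant $C_*$, independent of $j$, with
\[
\|u_j\|_{L^\infty((0,t_1);L^2(\Omega))}+\|v_j\|_{L^\infty((0,t_1);L^2(\Omega))}+\|u_j\|_{L^2((0,t_1);H^1(\Omega))}+\|v_j\|_{L^2((0,t_1);H^1(\Omega))}\le C_* .
\]

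Next I would bound the time derivatives in $L^2((0,t_1);(H^1(\Omega))^\prime)$. From the weak forms (\ref{eq-wsol-u})--(\ref{eq-wsol-v}), for $w\in H^1(\Omega)$ we have $\langle (u_j)_t,w\rangle=-\int_\Omega(\nabla u_j-v_j\nabla\bar\phi_j)\cdot\nabla w\,dx$ and an analogous identity for $(v_j)_t$, in which the saturable coefficients $\tfrac{1}{1+\bar u_j^*}$ and $\tfrac{\bar v_j^*}{1+\bar u_j^*}$ are bounded by $1$ by (\ref{uv-bar}); thus it suffices to bound the flux fields $\nabla u_j-v_j\nabla\bar\phi_j$ and $\nabla v_j-u_j\nabla\bar\phi_j$ in $L^2(Q_{t_1})$. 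The gradient parts are controlled by $C_*$, and the drift parts are handled exactly as in the derivation of (\ref{id19}): by (\ref{id19-1}), (\ref{id19-2}) and H\"{o}lder's inequality,
\[
\int_0^{t_1}\|v_j\nabla\bar\phi_j\|_{L^2(\Omega)}^2\,dt\le C\,\|v_j\|_{L^\infty((0,t_1);L^2(\Omega))}\,\|v_j\|_{L^2((0,t_1);H^1(\Omega))}\Big(\int_0^{t_1}\|\nabla\bar\phi_j\|_{L^6(\Omega)}^4\,dt\Big)^{1/2},
\]
and the last factor is at most $C\big(\int_0^{t_1}(\|\bar v_j\|_{L^2(\Omega)}^2+\|\phi_1\|_{H^2(\Omega)}^2)^2\,dt\big)^{1/2}$, again bounded uniformly in $j$; the term $u_j\nabla\bar\phi_j$ is estimated the same way. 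Hence $\{(u_j)_t\}$ and $\{(v_j)_t\}$ are bounded in $L^2((0,t_1);(H^1(\Omega))^\prime)$ uniformly in $j$.

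With these two bounds, the Aubin--Lions--Simon lemma --- applied with $H^1(\Omega)$ compactly embedded in $L^2(\Omega)$ and $L^2(\Omega)$ continuously embedded in $(H^1(\Omega))^\prime$ --- shows that $\{u_j\}$ and $\{v_j\}$ are relatively compact in $L^2((0,t_1);L^2(\Omega))$. To recover the exponent $4$ appearing in the definition of $X$, I use the elementary interpolation $\|g\|_{L^4((0,t_1);L^2(\Omega))}^2\le\|g\|_{L^\infty((0,t_1);L^2(\Omega))}\,\|g\|_{L^2((0,t_1);L^2(\Omega))}$ together with the uniform $L^\infty((0,t_1);L^2(\Omega))$ bound: a subsequence of $\{u_j\}$ (resp. $\{v_j\}$) that is Cauchy in $L^2((0,t_1);L^2(\Omega))$ is then Cauchy in $L^4((0,t_1);L^2(\Omega))$. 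Passing to a common subsequence and writing $c_{n,j}=\tfrac12(u_j+v_j)$, $c_{p,j}=\tfrac12(u_j-v_j)$, we obtain a subsequence of $\{(c_{n,j},c_{p,j})\}$ convergent in $X$. Since every sequence in $G$ thus has an $X$-convergent subsequence, $\overline G$ is compact in $X$, which proves Lemma~\ref{lem3}.

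I expect the main obstacle to be the uniform $L^2((0,t_1);(H^1(\Omega))^\prime)$ estimate on the time derivatives, and within it the control of the drift fields $v_j\nabla\bar\phi_j$, $u_j\nabla\bar\phi_j$ in $L^2(Q_{t_1})$: this requires re-running the Sobolev and interpolation estimates behind (\ref{id19}) in the space--time setting and using that membership of $(\bar c_n,\bar c_p)$ in $B_R(0)$ makes $\int_0^{t_1}\|\bar v\|_{L^2(\Omega)}^4\,dt$ finite. The remaining ingredients --- the energy bounds from Lemma~\ref{lem1}, the boundedness of the saturable coefficients via (\ref{uv-bar}), the interpolation in the time variable, and the linear change of unknowns --- are routine.
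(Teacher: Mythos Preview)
Your proof is correct and follows essentially the same approach as the paper: uniform bounds in $L^\infty((0,t_1);L^2(\Omega))\cap L^2((0,t_1);H^1(\Omega))$ from Lemma~\ref{lem1} and (\ref{id26})--(\ref{id27}), a uniform $L^2((0,t_1);(H^1(\Omega))')$ bound on the time derivatives obtained by pairing the weak form against a test function and controlling the flux fields (gradient plus drift) in $L^2(Q_{t_1})$, then Aubin--Lions for precompactness in $L^2((0,t_1);L^2(\Omega))$, and finally interpolation with the $L^\infty((0,t_1);L^2(\Omega))$ bound to upgrade to $L^4((0,t_1);L^2(\Omega))$. The only cosmetic difference is that you carry out the time-derivative estimate for the pair $(u_j,v_j)$ and then pass back to $(c_{n,j},c_{p,j})$, whereas the paper estimates $\partial_t c_n$ and $\partial_t c_p$ directly from (\ref{id9})--(\ref{id10}); since the two pairs are related by an invertible linear change of variables this makes no difference.
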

\begin{proof}
We may follow the proof of the standard PNP system (cf.~\cite{B1992} and \cite{BHN1994}). Equation (\ref{id9}) implies
\begin{align*}
\bigg|\left<\frac{\partial c_n}{\partial t},\eta\right>\bigg| &= \bigg|\int_\Omega \left[\frac{1}{1+\bar{c}_{n}^*+\bar{c}_p^*} \Big((1+\bar{c}_{n}^*)(\nabla c_n - c_n\nabla \bar{\phi}) + \bar{c}_{n}^* (\nabla c_p + c_p\nabla \bar{\phi})\Big)\right]\cdot\nabla\eta dx\bigg| \\
&\leq \int_\Omega \left(|\nabla c_n| + |c_n\nabla \bar{\phi}|+ |\nabla c_p| + |c_p\nabla \bar{\phi}|\right)|\nabla\eta|dx \\
&\leq \left(\|\nabla c_n\|_{L^2(\Omega)} + \|c_n\nabla \bar{\phi}\|_{L^2(\Omega)} + \|\nabla c_p\|_{L^2(\Omega)} + \|c_p\nabla \bar{\phi}\|_{L^2(\Omega)} \right)\|\nabla \eta\|_{L^2(\Omega)}.
\end{align*}
for any test function $\eta \in H^1(\Omega)$. By (\ref{id27}), $\|\nabla c_n\|_{L^2((0,t_1);L^2(\Omega))}$ and $\|\nabla c_p\|_{L^2((0,t_1);L^2(\Omega))}$ are uniformly bounded for $(c_n,c_p)\in G$. Moreover, by (\ref{id19-2}) and Holder's inequality, we may get $\|c_n\nabla \bar{\phi}\|_{L^2((0,t_1);L^2(\Omega))}$ and $\|c_p\nabla \bar{\phi}\|_{L^2((0,t_1);L^2(\Omega))}$ are uniformly bounded for $(c_n,c_p)\in G$. Consequently, $\|\frac{\partial c_n}{\partial t}\|_{L^2((0,t_1);H^{-1}(\Omega))}$ is uniformly bounded for $(c_n,c_p)\in G$.

Similarly, we have the uniform boundedness of $\|\frac{\partial c_p}{\partial t}\|_{L^2((0,t_1);H^{-1}(\Omega))}$.
Moreover, (\ref{id26}) and (\ref{id27}) give $c_n,c_p \in L^2((0,t_1);H^1(\Omega))$. Therefore, by Aubin-Lions lemma, $G$ is precompact in $L^2((0,t_1);L^2(\Omega))$ and also in $X =(L^4((0,t_1);L^2(\Omega)))^2$ because of the boundedness of $c_n,c_p$ in $L^\infty((0,t_1);L^2(\Omega))$.
\end{proof}

By Lemma \ref{lem2}, Lemma \ref{lem3}, and Schauder's fixed-point theorem, there exists a fixed point $(c_n,c_p)\in B_R(0)$ of $F$, which is a solution of
\begin{align}
\frac{\partial c_n}{\partial t} &= \nabla \cdot \left[\frac{1}{1+c_n^*+c_p^*} \Big((1+c_n^*)(\nabla c_n - c_n\nabla \phi) + c_n^* (\nabla c_p + c_p\nabla \phi)\Big)\right], \label{id55} \\
\frac{\partial c_p}{\partial t} &= \nabla \cdot \left[\frac{1}{1+c_n^*+c_p^*} \Big((1+c_p^*)(\nabla c_p + c_p\nabla \phi) + c_p^* (\nabla c_n - c_n\nabla \phi)\Big)\right] \label{id56}
\end{align}
with (\ref{id3})--(\ref{id8}), where
\begin{align*}
c_n^* &= \min\{c_{n+},5M_0\} = \min\{\max\{c_n,0\},5M_0\}, \\
c_p^* &= \min\{c_{p+},5M_0\} = \min\{\max\{c_p,0\},5M_0\}.
\end{align*}
We will now show $c_n^* = c_n$ and $c_p^* = c_p$ in a short time interval $(0,t_0)$ by the following lemma: \\
\begin{lem} \label{lem4}
The solution of (\ref{id55})--(\ref{id56}), (\ref{id3})--(\ref{id8})
satisfies $c_n,c_p \geq 0$ and $c_n+c_p \leq 5M_0$ for $0<t<t_0$ for
some $t_0 > 0$.
\end{lem}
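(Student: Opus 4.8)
The plan is to prove the two assertions in turn. First I would show $c_n,c_p\ge0$ on the whole interval $(0,t_1)$ on which the fixed point exists; then, on a possibly shorter interval $(0,t_0)$, I would show $c_n+c_p\le 5M_0$. Together these force $0\le c_n,c_p\le 5M_0$, hence $c_n^{*}=c_n$ and $c_p^{*}=c_p$ on $(0,t_0)$, so the fixed point is in fact a weak solution of \eqref{id1}--\eqref{id8} there. For nonnegativity I would test \eqref{id55} against $\min\{c_n,0\}\in L^2((0,t_1);H^1(\Omega))$, which is an admissible test function in the weak formulation (for which, by the natural no-flux boundary conditions, no boundary term appears). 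The key structural point is that on $\{c_n<0\}$ one has $c_n^{*}=\min\{\max\{c_n,0\},5M_0\}=0$, so the flux in \eqref{id55} collapses there to $\tfrac{1}{1+c_p^{*}}(\nabla c_n-c_n\nabla\phi)$ with $\tfrac{1}{1+c_p^{*}}\in[\tfrac{1}{1+5M_0},1]$; writing $c_n^{-}=\max\{-c_n,0\}$, this yields
\[
\tfrac12\,\tfrac{d}{dt}\|c_n^{-}\|_{L^2(\Omega)}^2
=-\int_\Omega\tfrac{1}{1+c_p^{*}}|\nabla c_n^{-}|^2\,dx
+\int_\Omega\tfrac{c_n^{-}}{1+c_p^{*}}\,\nabla\phi\cdot\nabla c_n^{-}\,dx .
\]
Estimating the last integral exactly as in \eqref{id19} (H\"older with $\tfrac13+\tfrac16+\tfrac12=1$, the interpolation \eqref{id19-1}, and Young's inequality) and absorbing $\|\nabla c_n^{-}\|_{L^2}^2$ gives $\tfrac{d}{dt}\|c_n^{-}\|_{L^2}^2\le C\bigl(1+\|\nabla\phi\|_{L^6(\Omega)}^4\bigr)\|c_n^{-}\|_{L^2}^2$, where by \eqref{id19-2} and $v=c_n-c_p\in L^\infty((0,t_1);L^2(\Omega))$ (from \eqref{id26}) the coefficient lies in $L^1(0,t_1)$. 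Since $c_n^{-}(\cdot,0)=0$ by \eqref{id7}, Gronwall's inequality forces $c_n^{-}\equiv0$, i.e.\ $c_n\ge0$; the same argument applied to \eqref{id56} gives $c_p\ge0$.

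For the upper bound, note that once $c_n,c_p\ge0$, adding \eqref{id55} and \eqref{id56} makes every truncation-dependent coefficient cancel (just as \eqref{id9}+\eqref{id10} produced \eqref{id13}), so $u=c_n+c_p$ satisfies the genuine equation $u_t=\nabla\cdot(\nabla u-v\nabla\phi)$ with $u\ge0$, $u(\cdot,0)=u_0$ where $0\le u_0\le 2M_0<5M_0$, and $\Delta\phi=v=c_n-c_p$. The aim is a bound $\|u\|_{L^\infty(\Omega\times(0,t))}\le\|u_0\|_{L^\infty(\Omega)}+\omega(t)$ with $\omega$ nondecreasing and $\omega(0^{+})=0$; choosing $t_0$ with $\omega(t_0)<3M_0$ then gives $u<5M_0$ on $(0,t_0)$, whence $c_n^{*}=c_n$, $c_p^{*}=c_p$ there. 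To produce such a bound I would first upgrade the electric field: from $v\in L^\infty((0,t_1);L^2(\Omega))\cap L^2((0,t_1);H^1(\Omega))$ (Lemma~\ref{lem1}, \eqref{id26}--\eqref{id27}) and the parabolic interpolation $L^\infty_tL^2_x\cap L^2_tH^1_x\hookrightarrow L^{10/3}(\Omega\times(0,t_1))$ (for $d=3$; $d\le2$ is easier), elliptic $L^p$-regularity for $\Delta\phi=v$ with \eqref{id6} gives $\nabla\phi\in L^{10/3}((0,t_1);L^\infty(\Omega))$; feeding $v\nabla\phi$ back into parabolic $L^p$-maximal regularity for the $u$-equation together with the coupled estimate for the $v$-equation \eqref{id14} (whose coefficients are bounded by \eqref{uv-bar}), and iterating finitely many times---each round improving the space-time integrability of $u$ and $v$, hence of $\nabla\phi$---brings $v\nabla\phi$ into a subcritical class $L^q((0,t_1);L^p(\Omega))$ with $\tfrac2q+\tfrac dp<1$. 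At that point a De Giorgi truncation argument for the $u$-equation (now closing, since the forcing is subcritical and $u_0\in L^\infty$) yields $\|u(t)\|_{L^\infty}\le\|u_0\|_{L^\infty}+\omega(t)$, and we conclude as above.

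The main obstacle is exactly this $L^\infty$ bound with the correct small-time behavior. The plain energy method does not suffice: testing the $u$-equation against $(u-5M_0)_{+}$ produces $\int_\Omega v\nabla\phi\cdot\nabla(u-5M_0)_{+}\,dx$, which does not vanish on $\{u>5M_0\}$ and only yields $\|(u-5M_0)_{+}(t)\|_{L^2}^2\le C\int_0^t\|v\|_{L^3}^2\|\nabla\phi\|_{L^6}^2\,ds$, not $0$. Thus one really needs a pointwise bound, and the delicate point is that $v\nabla\phi$ is borderline for the parabolic scaling associated with the norms furnished by Lemma~\ref{lem1}: e.g.\ $v\in L^2(H^1)$ and $\nabla\phi\in L^\infty(L^6)$ give only $v\nabla\phi\in L^2((0,t_1);L^3(\Omega))$, which in $d=3$ is critical, not subcritical, so a De Giorgi iteration does not close without improvement. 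It is the extra regularity that $\phi$ inherits from $\Delta\phi=v$ which, after a finite bootstrap, moves the drift into the subcritical range; carrying this out while keeping the constants uniform up to $t=0$ (so the bound can be compared with $\|u_0\|_{L^\infty}\le2M_0$) is where the real work lies.
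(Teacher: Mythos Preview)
Your nonnegativity argument is essentially the paper's: test \eqref{id55} against $c_{n-}=\min\{c_n,0\}$, use that $c_n^{*}=0$ on $\{c_n<0\}$ so the cross term drops, and close by Gronwall. Fine.

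For the upper bound, however, you misdiagnose the obstacle and thereby propose an unnecessarily heavy (and incomplete) route. You claim the drift $v\nabla\phi$ in $u_t=\nabla\cdot(\nabla u-v\nabla\phi)$ is borderline and must be pushed subcritical by a regularity bootstrap on $\phi$ via elliptic $L^p$ theory and parabolic maximal regularity. Two concrete problems: (i) with only $\phi_1\in H^2(\Omega)$ assumed for the boundary data, elliptic $L^p$ regularity for $\Delta\phi=v$ does not deliver $\nabla\phi\in L^\infty(\Omega)$ from $v\in L^{10/3}$ as you assert---the inhomogeneous part $\Delta\phi_1\in L^2$ caps the gain at $H^2$; (ii) the $v$-equation \eqref{id14} has merely bounded, non-smooth coefficients $\frac{1}{1+\bar u^{*}}$, $\frac{\bar v^{*}}{1+\bar u^{*}}$, so invoking $L^p$-maximal regularity there is not justified, and your ``finite iteration'' is not specified.

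The paper closes the level-set argument directly by one structural observation you overlooked: since $c_n,c_p\ge0$, one has $|v|\le u$. Testing against $u^{(M)}=(u-M)_+$ for $M\ge2M_0$ and using $v^2\le u^2=(u^{(M)}+M)^2\le 2((u^{(M)})^2+M^2)$ on $A_M(t)$ gives
\[
\|u^{(M)}\|_{Q_\tau}^2\le C\int_{Q_\tau(M)}\bigl((u^{(M)})^2+M^2\bigr)|\nabla\phi|^2\,dx\,dt.
\]
Now $\nabla\phi\in L^\infty((0,t_1);L^6(\Omega))$ already from \eqref{id19-2} and \eqref{id26}, so $\|\nabla\phi\|_{L^6(Q_\tau)}^2\le C\tau^{1/3}$; this smallness in $\tau$ lets you absorb the $(u^{(M)})^2$ contribution after H\"older and the embedding \eqref{id63}, leaving $\|u^{(M)}\|_{Q_\tau}^2\le CM^2\tau^{1/3}|Q_\tau(M)|^{2/3}$. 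That is exactly the De~Giorgi recursion input (Theorem~6.1, Chapter~II of \cite{LSU1968}), yielding $\|u\|_{L^\infty(Q_\tau)}\le 4M_0(1+\tilde C\tau^{1/3})$, hence $u\le5M_0$ for $\tau$ small. In short, once you use $|v|\le u$, the equation is of the form $u_t=\nabla\cdot(\nabla u+ub)$ with $|b|\le|\nabla\phi|\in L^\infty_tL^6_x$, which is a \emph{subcritical} drift in $d\le3$; no bootstrap is needed.
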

\begin{proof}
Let $c_{n-} = \min\{c_n,0\}$. Then
\begin{align*}
\frac{1}{2}\frac{d}{dt}\int_\Omega c_{n-}^2 dx &= -\int_\Omega \frac{1}{1+c_p^*} (|\nabla c_{n-}|^2 - c_{n-} \nabla \phi \cdot \nabla c_{n-}) dx \notag \\
&\leq -\frac{1}{1+5M_0} \int_\Omega |\nabla c_{n-}|^2 dx + \|c_{n-}\|_{L^3(\Omega)} \|\nabla\phi\|_{L^6(\Omega)} \|\nabla c_{n-}\|_{L^2(\Omega)} \notag \\
&\leq -\frac{1}{1+5M_0} \int_\Omega |\nabla c_{n-}|^2 dx + C \|c_{n-}\|_{L^2(\Omega)}^{1/2} \|c_{n-}\|_{H^1(\Omega)}^{1/2} \|\nabla\phi\|_{L^6(\Omega)} \|\nabla c_{n-}\|_{L^2(\Omega)} \\
&\leq -\frac{1}{1+5M_0} \|\nabla c_{n-}\|_{L^2(\Omega)}^2 + \beta \|c_{n-}\|_{H^1(\Omega)}^2 + C(\beta) \|c_{n-}\|_{L^2(\Omega)}^2 \|\nabla\phi\|_{L^6(\Omega)}^4 \notag \\
&\leq \|c_{n-}\|_{L^2(\Omega)}^2 (1 + C \|\nabla\phi\|_{L^6(\Omega)}^4)\,, \notag
\end{align*}
where $\beta = \frac{1}{1+5M_0}$. Since $\|c_{n-}\|_{L^2(\Omega)}
= 0$ at $t=0$ and $1 + C \|\nabla\phi\|_{L^6(\Omega)}^4 \in
L^1((0,t_1))$, $c_{n-} \equiv 0$ i.e. $c_n \geq 0$.

Similarly, we may let
$c_{p-} = \min\{c_p,0\}$ and get
\begin{align*}
\frac{1}{2}\frac{d}{dt}\int_\Omega c_{p-}^2 dx &= -\int_\Omega \frac{1}{1+c_n^*} (|\nabla c_{p-}|^2 + c_{p-} \nabla \phi \cdot \nabla c_{p-}) dx \\
&\leq \|c_{p-}\|_{L^2(\Omega)}^2 (1 + C \|\nabla\phi\|_{L^6(\Omega)}^4), \notag
\end{align*}
which implies $c_{p-} \equiv 0$ i.e., $c_p \geq 0$. Now, we consider $u :=
c_n+c_p$ and $v := c_n-c_p$ which satisfy
\begin{align}
u_t &= \nabla \cdot (\nabla u - v\nabla \phi), \label{id59}
\end{align}
with boundary condition
\begin{align}
(\nabla u - v\nabla \phi)\cdot\nu &= 0. \label{id60}
\end{align}
To estimate the maximum of $u$, for $M \geq 2M_0$, we set
$u^{(M)} := \max\{u-M,0\}$ and $A_M(t):=\{x\in\Omega: u(x,t)>M\}$.
Multiply (\ref{id59}) by $u^{(M)}$ and take integration by parts. Then
\begin{align}
\frac{1}{2}\frac{d}{dt}\int_\Omega (u^{(M)})^2 dx &= -\int_\Omega |\nabla u^{(M)}|^2 dx + \int_{A_M(t)} (v \nabla \phi \cdot \nabla u^{(M)}) dx \notag\\
&\leq -\frac{1}{2}\int_\Omega |\nabla u^{(M)}|^2 dx + \frac{1}{2}\int_{A_M(t)} v^2 |\nabla \phi|^2 dx, \notag\\
&\leq -\frac{1}{2}\int_\Omega |\nabla u^{(M)}|^2 dx + \frac{1}{2}\int_{A_M(t)} u^2 |\nabla \phi|^2 dx. \label{id61}
\end{align}
For the last inequality in (\ref{id61}), we have used the fact that $u^2 \geq v^2$ because of $c_n,c_p \geq 0$. Hence
\begin{align}
\|u^{(M)}\|_{L^\infty((0,\tau);L^2(\Omega))}^2 + \|\nabla u^{(M)}\|_{L^2((0,\tau);L^2(\Omega))}^2 &\leq 2\int_0^{\tau} \int_{A_M(t)} u^2 |\nabla \phi|^2 dx dt \notag \\
&= 2\int_0^{\tau} \int_{A_M(t)} (u^{(M)} + M)^2 |\nabla \phi|^2 dx dt \notag\\
&\leq 4 \int_0^{\tau} \int_{A_M(t)} ((u^{(M)})^2 + M^2) |\nabla \phi|^2 dx dt \label{id62}
\end{align}
for $\tau \in (0,t_1)$. For simplicity, we employ some notations used
in \cite{LSU1968} that
\begin{align*}
Q_s &:= \Omega\times (0,s), \\
V_2(Q_s) &:= L^\infty((0,s);L^2(\Omega)) \cap L^2((0,s);H^1(\Omega)),
\end{align*}
and
\begin{align*}
\|w\|_{Q_s} := \|w\|_{L^\infty((0,s);L^2(\Omega))} + \|w\|_{L^2((0,s);H^1(\Omega))}
\end{align*}
for $w \in V_2(Q_s)$. In addition, we have the embedding
\begin{align}
\|w\|_{L^r((0,s);L^q(\Omega))} \leq C_s \|w\|_{Q_s} \label{id63}
\end{align}
for all $w \in V_2(Q_s)$, where $1/r + d/2q = d/4$, and
\begin{align*}
C_s = \beta_0 + (s^{d/2}|\Omega|^{-1})^{\frac{1}{2}-\frac{1}{q}}
\end{align*}
with $\beta_0$ depends only on $q,r,d$, and $\Omega$. Notice that the constant $C_s$ for (\ref{id63}) is increasing in $s$, then for $0 < s \leq t_1$, we can use the same constant $C_{t_1}$ such that
\begin{align}
\|w\|_{L^r((0,s);L^q(\Omega))} \leq C_{t_1} \|w\|_{Q_s}, \label{id63_1}
\end{align}
where $C_{t_1}$ is the constant in (\ref{id63}) with domain $Q_{t_1}$. Now, from
(\ref{id62}), we have
\begin{align}
\|u^{(M)}\|_{Q_{\tau}}^2 \leq C \int_0^{\tau} \int_{A_M(t)} ((u^{(M)})^2 + M^2) |\nabla \phi|^2 dx dt \label{id64}
\end{align}
for $0 < \tau < t_1$, where $C$ is a positive constant independent of $c_n,c_p,u,v,\phi,M$, and $\tau$. We will use $C$ to denote constants that may vary from line to line, but they are independent of $c_n,c_p,u,v,\phi,M$, and $\tau$. Then, by H\"{o}lder's inequality,
\begin{align}
\|u^{(M)}\|_{Q_{\tau}}^2 &\leq C \||\nabla \phi|^2\|_{L^3(Q_{\tau}(M))} \left(\|(u^{(M)})^2\|_{L^{\frac{3}{2}}(Q_{\tau}(M))} + \|M^2\|_{L^{\frac{3}{2}}(Q_{\tau}(M))}\right) \notag\\
&= C \|\nabla \phi\|_{L^6(Q_{\tau}(M))}^2 \left(\|u^{(M)}\|_{L^{3}(Q_{\tau}(M))}^2 + M^2 |Q_{\tau}(M)|^\frac{2}{3}\right) \label{id65}
\end{align}
where $Q_{\tau}(M) := Q_{\tau}\cap\{(x,t)\in Q_{\tau}: u(x,t)>M\}$. Here, by using the same inequality as (\ref{id19-2}) and (\ref{id26}), we have
\begin{align*}
\|\nabla \phi\|_{L^6(\Omega)}^2 &\leq C(\|v\|_{L^2(\Omega)}^2 + \|\phi_1\|_{H^2(\Omega)}^2) \\
&\leq C(\|u_0\|_{L^2(\Omega)}^2 + \|v_0\|_{L^2(\Omega)}^2)\cdot\exp\left\{K_2\int_0^{t_1} (1 + (\|v\|_{L^2(\Omega)}^2 + \|\phi_1\|_{H^2(\Omega)}^2)^2) ds\right\} \\
&\leq C(\|u_0\|_{L^2(\Omega)}^2 + \|v_0\|_{L^2(\Omega)}^2)\cdot\exp\left\{K_2\int_0^{t_1} (1 + 2\|v\|_{L^2(\Omega)}^4 + 2\|\phi_1\|_{H^2(\Omega)}^4) ds\right\} \\
&\leq C(\|u_0\|_{L^2(\Omega)}^2 + \|v_0\|_{L^2(\Omega)}^2)\cdot\exp\left\{K_2 \Big[t_1(1+2\|\phi_1\|_{H^2(\Omega)}^4) + 2R^4 \Big]\right\},
\end{align*}
where $R$ is the radius of the ball in $X$ where we obtain the solution $(c_n,c_p)$ as a fixed point of $F$. That is, $\nabla\phi\in L^\infty((0,t_1);L^6(\Omega))\subset L^6(Q_{t_1})$. Moreover,
\begin{align*}
\|\nabla \phi\|_{L^6(Q_{\tau}(M))}^2 &\leq \|\nabla \phi\|_{L^6(Q_{\tau})}^2 \notag\\
&\leq C\tau^{\frac{1}{3}}(\|u_0\|_{L^2(\Omega)}^2 + \|v_0\|_{L^2(\Omega)}^2) \\
&\quad\ \ \cdot\exp\left\{K_2 \Big[t_1(1+2\|\phi_1\|_{H^2(\Omega)}^4) + 2R^4 \Big]\right\}. \notag
\end{align*}
Hence, (\ref{id65}) implies
\begin{align}
\|u^{(M)}\|_{Q_{\tau}}^2 &\leq C\tau^{\frac{1}{3}}(\|u_0\|_{L^2(\Omega)}^2 + \|v_0\|_{L^2(\Omega)}^2)\cdot\exp\left\{K_2 \Big[t_1(1+2\|\phi_1\|_{H^2(\Omega)}^4) + 2R^4 \Big]\right\} \notag\\
&\quad\ \ \cdot\left(\|u^{(M)}\|_{L^{3}(Q_{\tau}(M))}^2 + M^2 |Q_{\tau}(M)|^\frac{2}{3}\right). \label{id65-1}
\end{align}
We now estimate the norm of $u^{(M)}$ in the right-hand side of
(\ref{id65-1}) by H\"{o}lder's inequality and (\ref{id63}),
\begin{align*}
\|u^{(M)}\|_{L^3 (Q_{\tau}(M))} &\leq \|u^{(M)}\|_{L^{2+\frac{4}{d}} (Q_{\tau}(M))} |Q_{\tau}(M)|^{\frac{4-d}{12+6d}} \notag \\
&\leq C_{t_1} \|u^{(M)}\|_{Q_{\tau}} |Q_{\tau}(M)|^{\frac{4-d}{12+6d}}. 
\end{align*}
Notice that
\begin{align*}
|Q_{\tau}(M)| \leq |Q_{\tau}| = \tau|\Omega|.
\end{align*}
Thus, for (\ref{id65-1}), if $\tau \leq \tau_0$, where
\begin{align*}
\tau_0 := \min\bigg\{t_1,&\bigg(\frac{1}{2}C^{-1}C_{t_1}^{-2}(\|u_0\|_{L^2(\Omega)}^2 + \|v_0\|_{L^2(\Omega)}^2)^{-1} |\Omega|^{-\frac{4-d}{6+3d}} \\
&\quad\ \ \cdot\exp\left\{-K_2 \Big[t_1(1+2\|\phi_1\|_{H^2(\Omega)}^4) + 2R^4 \Big]\right\}\bigg)^{\frac{2+d}{2}}\bigg\},
\end{align*}
i.e.,
\begin{align*}
CC_{t_1}^2\tau_0^{\frac{1}{3}}(\|u_0\|_{L^2(\Omega)}^2 + \|v_0\|_{L^2(\Omega)}^2)\cdot\exp\left\{K_2 \Big[t_1(1+2\|\phi_1\|_{H^2(\Omega)}^4) + 2R^4 \Big]\right\} |Q_{\tau_0}|^{\frac{4-d}{6+3d}} \leq \frac{1}{2},
\end{align*}
then we have
\begin{align*}
\|u^{(M)}\|_{Q_{\tau}}^2 &\leq 2CM^2\tau^{\frac{1}{3}}(\|u_0\|_{L^2(\Omega)}^2 + \|v_0\|_{L^2(\Omega)}^2) |Q_{\tau}(M)|^\frac{2}{3} \\
&\quad\ \ \cdot\exp\left\{K_2 \Big[t_1(1+2\|\phi_1\|_{H^2(\Omega)}^4) + 2R^4 \Big]\right\} \\
&\leq 2CM^2 t_1^{\frac{1}{3}}(\|u_0\|_{L^2(\Omega)}^2 + \|v_0\|_{L^2(\Omega)}^2) |Q_{\tau}(M)|^\frac{2}{3}\\
&\quad\ \ \cdot\exp\left\{K_2 \Big[t_1(1+2\|\phi_1\|_{H^2(\Omega)}^4) + 2R^4 \Big]\right\}.
\end{align*}
By Theorem 6.1 of Chapter II in \cite{LSU1968}, we have
\begin{align*}
\|u\|_{L^\infty(Q_{\tau})} \leq 4M_0 \left(1 + \tilde{C}\tau^{\frac{1}{3}}\right)
\end{align*}
for $0 < \tau \leq \tau_0$, where
\begin{align*}
\tilde{C} = 2^{\frac{2}{\kappa}+\frac{1}{\kappa^2}} |\Omega|^{\frac{1}{3}} C_{t_1}^{1+\frac{1}{\kappa}} &\bigg(2C t_1^{\frac{1}{3}}(\|u_0\|_{L^2(\Omega)}^2 + \|v_0\|_{L^2(\Omega)}^2) \\
&\quad\ \ \cdot\exp\left\{K_2 \Big[t_1(1+2\|\phi_1\|_{H^2(\Omega)}^4) + 2R^4 \Big]\right\}\bigg)^{\frac{1}{2}\left(1+\frac{1}{\kappa}\right)}
\end{align*}
with $\kappa = \frac{4-d}{3d}$. Therefore, $u(x,t) \leq 5M_0$ for $0<t<t_0$, where $t_0 = \min\{\tau_0,1/(4\tilde{C})^3\}$.
\end{proof}

By Lemma \ref{lem4}, $c_n^* = c_n$ and $c_p^* = c_p$ for $0<t<t_0$ in
(\ref{id55})--(\ref{id56}). Moreover, we have $c_n,c_p \in L^\infty(Q_{t_0})$.

Therefore, we may conclude the following theorem: \\

\begin{thm}\label{thm1}
Suppose that the initial data $c_{n,0}$ and $c_{p,0}$ satisfy (\ref{id7}) and (\ref{id8}), respectively. Then there exists $t_0 >0$ (depending on $\|c_{n,0}\|_{L^\infty(\Omega)}$, $\|c_{p,0}\|_{L^\infty(\Omega)}$,
$\|\phi_1\|_{H^2(\Omega)}$, $d$, $\alpha$, and $\Omega$) such that the system
(\ref{id1})--(\ref{id8}) has a solution $(c_n,c_p,\phi)$ with $0\le c_n,c_p \in
L^\infty((0,t_0);L^\infty(\Omega)) \cap L^2((0,t_0);H^1(\Omega))$ and $\frac{\partial c_n}{\partial t},\frac{\partial c_p}{\partial t} \in L^2((0,t_0);H^{-1}(\Omega))$.\\
\end{thm}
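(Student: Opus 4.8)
The plan is to collect Lemmas~\ref{lem1}--\ref{lem4} and close the argument with Schauder's fixed-point theorem, followed by a truncation-removal step. First I would fix a radius $R>0$ and then, as explained after \eqref{R}, choose $t_1>0$ so small that the invariance inequality \eqref{R} holds; this guarantees $F(B_R(0))\subseteq B_R(0)$ for the closed ball $B_R(0)=\{(f,g)\in X:\|(f,g)\|_X\le R\}$, which is a closed, bounded, convex subset of the Banach space $X=(L^4((0,t_1);L^2(\Omega)))^2$. Lemma~\ref{lem2} shows that $F$ is continuous on $X$ and Lemma~\ref{lem3} shows that $\overline{F(B_R(0))}$ is compact in $X$; hence $F:B_R(0)\to B_R(0)$ is a continuous compact self-map, and Schauder's theorem produces a fixed point $(c_n,c_p)\in B_R(0)$, which by the very definition \eqref{F-mp1} of $F$ solves the truncated system \eqref{id55}--\eqref{id56}, the Poisson equation \eqref{id3}, the flux conditions \eqref{id4}--\eqref{id5}, the Robin condition \eqref{id6}, and the initial data \eqref{id7}--\eqref{id8}, with $\phi$ determined from $c_n-c_p$.

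The second step is to show that the cutoffs $c_n^\ast,c_p^\ast$ are inactive for small time, so that this fixed point actually solves the modified PNP system \eqref{id1}--\eqref{id3}. Nonnegativity $c_n,c_p\ge 0$ follows by testing the equations against the negative parts $c_{n-},c_{p-}$ and using the interpolation inequality \eqref{id19-1} together with $\nabla\phi\in L^\infty((0,t_1);L^6(\Omega))$, which comes from \eqref{id19-2} and the energy estimate \eqref{id26}; this is exactly the opening computation in the proof of Lemma~\ref{lem4}. Given $c_n,c_p\ge 0$ one has $u=c_n+c_p\ge|v|=|c_n-c_p|$, and the bound $c_n+c_p\le 5M_0$ on a (possibly shorter) interval $(0,t_0)$ is precisely Lemma~\ref{lem4}, obtained by a Stampacchia level-set iteration applied to $u$ via \eqref{id59}, the parabolic embedding \eqref{id63_1}, and Theorem~6.1 of Chapter~II of \cite{LSU1968}. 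On $(0,t_0)$ we then have $c_n^\ast=\min\{c_n,5M_0\}=c_n$ and $c_p^\ast=c_p$, so \eqref{id55}--\eqref{id56} coincide with \eqref{id1}--\eqref{id2}, and $(c_n,c_p,\phi)$ is a solution of \eqref{id1}--\eqref{id8} on $(0,t_0)$.

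The third step is bookkeeping of regularity. Estimates \eqref{id26}--\eqref{id27} give $c_n,c_p\in L^\infty((0,t_0);L^2(\Omega))\cap L^2((0,t_0);H^1(\Omega))$, upgraded to $L^\infty((0,t_0);L^\infty(\Omega))$ by Lemma~\ref{lem4}; pairing \eqref{id55}--\eqref{id56} with a test function $\eta\in H^1(\Omega)$ and bounding the flux in $L^2$ as in the proof of Lemma~\ref{lem3} — using that the saturable coefficients are bounded by $1$ and that $\nabla\phi\in L^\infty((0,t_0);L^6(\Omega))$ — yields $\partial_t c_n,\partial_t c_p\in L^2((0,t_0);H^{-1}(\Omega))$. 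One checks along the way that every constant, and hence $t_0$ itself, depends only on $\|c_{n,0}\|_{L^\infty(\Omega)}$, $\|c_{p,0}\|_{L^\infty(\Omega)}$, $\|\phi_1\|_{H^2(\Omega)}$, $d$, $\alpha$, and $\Omega$, which gives the theorem.

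The step I expect to be the crux is the truncation removal, i.e.\ Lemma~\ref{lem4}: the fixed point is only a weak solution of a \emph{linear} problem with rough coefficients, so positivity and the $L^\infty$ bound must be extracted purely from energy-type and level-set arguments, and one must verify both that the iteration closes for all $d\le 3$ (the exponent $\kappa=(4-d)/(3d)$ in \cite{LSU1968} is positive exactly when $d<4$) and that the resulting existence time $t_0$ depends only on the stated data rather than on the solution. The continuity of $F$ (Lemma~\ref{lem2}) is the other delicate point, because the coefficients $\tfrac{1}{1+\bar u^\ast}$ and $\tfrac{\bar v^\ast}{1+\bar u^\ast}$ depend on $(\bar c_n,\bar c_p)$ only through $L^4((0,t_1);L^2(\Omega))$ convergence, which does not control gradients; this forces the domain-splitting in the parameter $\sigma$, but that work has already been carried out in the lemma, so within the proof of Theorem~\ref{thm1} it is only invoked.
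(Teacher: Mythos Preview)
Your proposal is correct and follows the paper's own argument essentially step for step: invariance of $B_R(0)$ via \eqref{R}, continuity and compactness from Lemmas~\ref{lem2}--\ref{lem3}, Schauder to obtain a fixed point of \eqref{F-mp1}, truncation removal on $(0,t_0)$ via Lemma~\ref{lem4}, and the regularity bookkeeping from \eqref{id26}--\eqref{id27} together with the duality bound in the proof of Lemma~\ref{lem3}. Your identification of Lemma~\ref{lem4} and Lemma~\ref{lem2} as the delicate ingredients matches the paper, and since those lemmas are already established, the proof of Theorem~\ref{thm1} is indeed just the assembly you describe.
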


\begin{rmk}
By using the Moser iteration method, we have another approach to estimate the upper bound of $c_n + c_p$. We can rewrite (\ref{id59}) to be
\begin{align}
u_t &= \nabla \cdot (\nabla u - u V), \label{m1}
\end{align}
where
\begin{align*}
V = \left\{ \begin{array}{ll}
\displaystyle\frac{v}{u} \nabla\phi & \mbox{if } u \neq 0,\\
0 & \mbox{if } u = 0.
\end{array}\right.
\end{align*}
Note that we have proved that $c_n, c_p \geq 0$ in Lemma \ref{lem4}, then $|v/u| \leq 1$ for $u \neq 0$.
Moreover, by (\ref{id19-2}) and Lemma \ref{lem1}, $V \in L^\infty((0,t_1);L^6(\Omega))$. Set $w = u^\theta$ for $\theta > 1$. From (\ref{m1}), we deduce that
\begin{align}
\frac{1}{2}\frac{d}{dt}\int_\Omega w^2 dx &= -\int_\Omega \left(|\nabla w| ^2 + \theta(\theta-1)u^{\theta-2}w|\nabla u^2| - (2\theta-1) wV\cdot\nabla w \right)dx \notag\\
&\leq -\frac{1}{2} \int_\Omega |\nabla w| ^2 dx + \frac{1}{2} (2\theta-1)^2 \int_\Omega w^2|V|^2 dx \notag\\
&\leq -\frac{1}{2} \int_\Omega |\nabla w| ^2 dx + \frac{1}{2} (2\theta-1)^2 \|V\|_{L^6(\Omega)}^2 \|w\|_{L^3(\Omega)}^2. \label{m4}
\end{align}
Then for $0 < \tau < t_1$, we have
\begin{align*}
\int_\Omega w^2 dx \bigg|_{t=\tau} + \int_{Q_{\tau}} |\nabla w|^2 dxdt &\leq (2\theta-1)^2 \|V\|_{L^6(Q_{\tau})}^2 \|w\|_{L^3(Q_{\tau})}^2 + \int_\Omega w^2 dx \bigg|_{t=t_0} \notag\\
&\leq \mu^2 \theta^2  \|w\|_{L^3(Q_{\tau})}^2 + (2M_0)^{2\theta} |\Omega|,
\end{align*}
where $\mu = 2\|V\|_{L^6(Q_{\tau})}^2$. If $\lambda > 0$ satisfies $\frac{1}{3(1+\lambda)}(1+\frac{d}{2}) = \frac{d}{4}$, then by (\ref{id63_1})
\begin{align*}
\|w\|_{L^{3(1+\lambda)}(Q_{\tau})} \leq C_{t_1} \|w\|_{Q_{\tau}},
\end{align*}
where $C_{t_1}$ is the constant in (\ref{id63_1}) with $q = r = 3(1+\lambda)$. Thus, (\ref{m4}) implies
\begin{align}
\|w\|_{L^{3(1+\lambda)}(Q_{\tau})} &\leq 2C_{t_1} \left\{ \mu \theta \|w\|_{L^3(Q_{\tau})} + (2M_0)^\theta |\Omega|^{1/2} \right\} \notag\\
&= 2C_{t_1} \left\{ \mu \theta \|w^{\frac{1}{1+\lambda}}\|_{L^{3(1+\lambda)}(Q_{\tau})}^{1+\lambda} + (2M_0)^\theta |\Omega|^{1/2} \right\}. \label{m6}
\end{align}
Set
\begin{align*}
\Phi_k = \|u^{(1+\lambda)^k}\|_{L^{3(1+\lambda)}(Q_{\tau})} = \|u\|_{L^{3(1+\lambda)^k}(Q_{\tau})}^{(1+\lambda)^k},
\end{align*}
then by letting $\theta = (1+\lambda)^k$, (\ref{m6}) becomes
\begin{align}
\Phi_k \leq 2C_{t_1} \left\{ \mu (1+\lambda)^k \Phi_{k-1}^{1+\lambda} + (2M_0)^{(1+\lambda)^k} |\Omega|^{1/2} \right\}. \label{m7}
\end{align}
From the recursion inequalities (\ref{m7}), one can use induction to deduce that
\begin{align*}
\Phi_k &\leq (4C_{t_1} (1+\lambda))^{\frac{(1+\lambda)^k-1}{\lambda}} (1+\lambda)^{\frac{(1+\lambda)^k-1}{\lambda^2} - \frac{k}{\lambda}} \\
&\quad\quad\cdot\max\left\{ \mu^{\frac{(1+\lambda)^k-1}{\lambda}} \Phi_0^{(1+\lambda)^k},\max\Big\{\mu^{\frac{(1+\lambda)^k-1}{\lambda}},1\Big\} (2\max\big\{|\Omega|^{\frac{1}{2}},1\big\}M_0)^{(1+\lambda)^k} \right\}. \notag
\end{align*}
Therefore,
\begin{align*}
\|u\|_{L^\infty(Q_{\tau})} &= \lim_{k\rightarrow\infty} \Phi_k^{(1+\lambda)^{-k}} \notag\\
&\leq (4C_{t_1})^{\frac{1}{\lambda}} (1+\lambda)^{\frac{1}{\lambda}+\frac{1}{\lambda^2}} \\
&\quad\quad\cdot\max\left\{ \mu^{\frac{1}{\lambda}} \Phi_0, 2\max\Big\{\mu^{\frac{1}{\lambda}},1\Big\}\cdot\max\big\{|\Omega|^{\frac{1}{2}},1\big\}M_0 \right\}. \notag
\end{align*}
This provides an estimate of upper bound of $u = c_n + c_p$.
\end{rmk}

In the next section, we do compare the modified PNP to the classical PNP in numerical results.
\section{Numerical Experiments}\label{sec4}
In this section, we discuss on numerical results of modified PNP (\ref{eqn:MPNPsystem_2.1})--(\ref{eqn:MPNPsystem_2.3})  comparing with those of PNP (\ref{eqn:PNPsystem_0.1})--(\ref{eqn:PNPsystem_0.3}).
The computational domain is $[ -1, 1 ]$ for numerical experiments.
Mesh size is fixed with $h = 2^{-7}$ and time step size with $dt=10^{-3}$ throughout numerical experiments.

In time discretization, the backward Euler is used as follows:
\begin{eqnarray}
 \frac{\Cnkp-\Cnk}{dt} & = \nabla \cdot & \left \{ \frac{D (1+\Cnkp)}{1+\Cnkp+\Cpkp} \left( \nabla\Cnkp+\frac{z_n q}{k_B T}\Cnkp \nabla\phip \right) \right.\nonumber\\
 & &\left.+\frac{D\Cnkp}{1+\Cnkp+\Cpkp} \left( \nabla\Cpkp+\frac{z_p q}{k_B T}\Cpkp \nabla\phip \right) \right \}\label{eqn:NE1}\\
 \frac{\Cpkp-\Cpk}{dt} & = \nabla \cdot & \left \{ \frac{D (1+\Cpkp)}{1+\Cnkp+\Cpkp} \left( \nabla\Cpkp+\frac{z_p q}{k_B T}\Cpkp \nabla\phip \right) \right.\nonumber\\
 & &\left.+\frac{D\Cpkp}{1+\Cnkp+\Cpkp} \left( \nabla\Cnkp+\frac{z_n q}{k_B T}\Cnkp \nabla\phip \right) \right \}
 \label{eqn:NE11}\\
 \nabla \cdot \left( \varepsilon \nabla \phip \right) & = & -z_n q \Cnkp -z_p q \Cpkp, \label{eqn:NE12}
\end{eqnarray}
for $k = 0, 1, \cdots$ with initial data $c_n^0$ and $c_p^0$.
We set no-flux boundary conditions for charge densities and Dirichlet boundary condition for the electrostatic potential,
\begin{eqnarray}\label{eqn:NE2}
 \phip(-1) = \phi^0(-1), \quad \phip(1) = \phi^0(1) \quad \mbox{for } k=0,1,2,\cdots.
\end{eqnarray}
The edge averaged finite element (EAFE) method and the finite element method with piecewise linear basis functions
are used to solve Nernst-Planck equations (\ref{eqn:NE1}), (\ref{eqn:NE11}) and Poisson equation (\ref{eqn:NE12}), respectively~\cite{XuZi99}.
The variational formulation of the modified PNP (\ref{eqn:MPNPsystem_2.1})--(\ref{eqn:MPNPsystem_2.3}) is given by
\begin{eqnarray}\label{eqn:NE3}
 \left( \Cnkp,\xi \right) & + & dt \left( \frac{D(1+\Cnkp)}{1+\Cnkp+\Cpkp}\left(\nabla\Cnkp+\frac{z_n q}{k_B T}\Cnkp \nabla\phip\right), \nabla \xi \right) \nonumber\\
 & + & dt \left( \frac{D\Cnkp}{1+\Cnkp+\Cpkp}\left( \nabla\Cpkp+\frac{z_p q}{k_B T}\Cpkp \nabla\phip \right), \nabla \xi \right) \\
 & = & \left( \Cnk,\xi \right), \nonumber\\
 \left( \Cpkp,\eta \right) & + & dt \left( \frac{D(1+\Cpkp)}{1+\Cnkp+\Cpkp}\left(\nabla\Cpkp+\frac{z_p q}{k_B T}\Cpkp \nabla\phip\right), \nabla \eta \right) \nonumber\\
 & + & dt \left( \frac{D\Cpkp}{1+\Cnkp+\Cpkp}\left( \nabla\Cnkp+\frac{z_n q}{k_B T}\Cnkp \nabla\phip \right), \nabla \eta \right) \\
 & = & \left( \Cpk,\eta \right), \nonumber\\
 \varepsilon \left( \Delta \phip, \zeta \right) & = & -\left( z_n q \Cnkp +z_p q \Cpkp, \zeta \right).
\end{eqnarray}
However, the numerical computation of (\ref{eqn:NE1})--(\ref{eqn:NE12}) is not an easy task,
especially solving the charge concentration and the electrostatic potential both at the same time.
To over come the drawback, we apply a sub-updating iterative step because Poisson equation is not in time scale,
that is, the electrostatic potential should be simultaneously updated with the charge density in time.

Let $D_{n,n}=\frac{D(1+\Cnkpm)}{1+\Cnkpm+\Cpkpm}$,  $D_{n,p}=\frac{D\Cnkpm}{1+\Cnkpm+\Cpkpm}$, $D_{p,p}=\frac{D(1+\Cpkpm)}{1+\Cnkpm+\Cpkpm}$, and $D_{p,n}=\frac{D\Cpkpm}{1+\Cnkpm+\Cpkpm}$. Then the sub-updating numerical scheme with the index $m$ is given by
\begin{eqnarray}
 (\Cnkpmp,\xi) & + & dt \left( D_{n,n} \left(\nabla\Cnkpmp +\frac{z_n q}{k_B T}\Cnkpmp \nabla\phipm\right), \nabla \xi \right) \nonumber\\
 = (\Cnk,\xi)& - & dt \left( D_{n,p} \left( \nabla\Cpkpm +\frac{z_p q}{k_B T}\Cpkpm \nabla\phipm \right), \nabla \xi \right), \label{eqn:NE4}\\
 (\Cpkpmp,\eta) & + & dt \left( D_{p,p} \left(\nabla\Cpkpmp +\frac{z_p q}{k_B T}\Cpkpmp \nabla\phipm\right), \nabla \eta \right) \nonumber\\
 = (\Cpk,\eta)& - & dt \left( D_{p,n} \left( \nabla\Cnkpm +\frac{z_n q}{k_B T}\Cnkpm \nabla\phipm \right), \nabla \eta \right), \label{eqn:NE41}\\
 \varepsilon \left( \Delta \phipmp, \zeta \right) & = & -\left( z_n q \Cnkpmp + z_p q \Cpkpmp, \zeta \right)
 \label{eqn:NE42}
\end{eqnarray}
for $m = 0, 1, 2 \cdots$ letting $c_n^{k+1,0}=\Cnk$, $c_p^{k+1,0}=\Cpk$.
The boundary condition of the electrostatic potential is
\begin{eqnarray}\label{eqn:NE5}
 \phipmp(-1) = \phi^0(-1), \quad \phipmp(1) = \phi^0(1)\quad \mbox{for } k,m=0,1,2,\cdots.
\end{eqnarray}

\begin{rmk}
Developing numerical scheme satisfying energy law is another work in numerical computations.
The numerical discretization scheme for (\ref{eqn:NE4})--(\ref{eqn:NE42})
has a certain limitation for preserving energy law in finite dimensional space.
However, the comparison of dissipations $\triangle^\ast$, $\triangle$ of the modified and original PNP systems may provide the difference between two systems.
\\
\end{rmk}

In Figure~\ref{fig:F_1}, we present numerical results of initial data (top row), equilibrium states ${{c}_{n}},{{c}_{p}}$ (middle row) and $\phi $ (bottom row) for the modified and original PNP systems with boundary conditions $\phi(-1)=0.05$, $\phi(1)=0.0$ (left panel) and $\phi(-1)=0.0$, $\phi(1)=0.05$ (right panel), respectively. These results show that the modified and original PNP systems have the same equilibrium states even though they are totally different systems of equations. However, different dynamics of the modified and original PNP systems can be expressed by numerical results of $\triangle^{\ast}$ and $\triangle$ in time (see Figure~\ref{fig:F_2}) due to the extra term $\frac{k_BT}{D_{n,p}}c_nc_p|\vec{u}^*_n - \vec{u}^*_p|^2$ in the dissipation functional of the modified PNP system.

\begin{figure}[hbt]
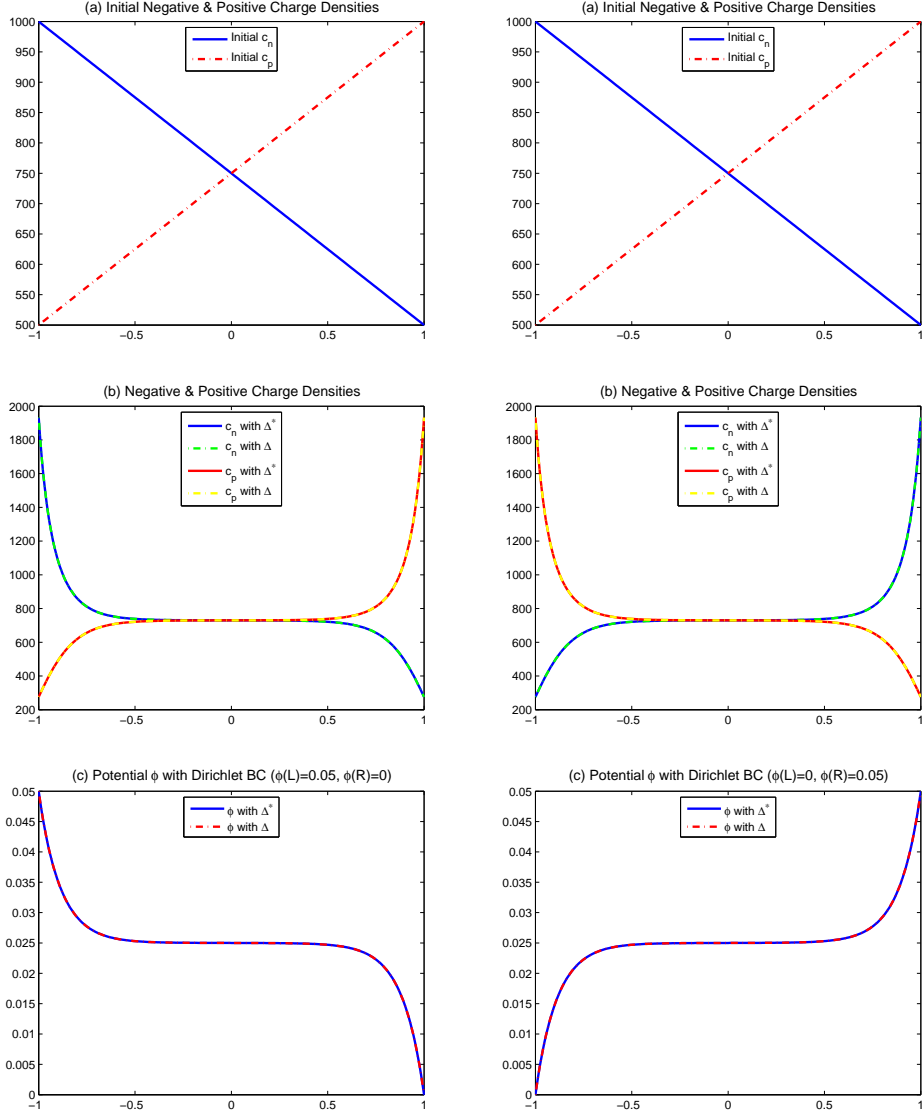

 \centering{%
  \begin{tabular}{@{\hspace{-0pc}}c@{\hspace{-0pc}}c}
   \psfig{figure=1_L_a.eps,width=2.6in} & \psfig{figure=1_R_a.eps,width=2.6in}\\
   \psfig{figure=1_L_b.eps,width=2.6in} & \psfig{figure=1_R_b.eps,width=2.6in}\\
   \psfig{figure=1_L_c.eps,width=2.6in} & \psfig{figure=1_R_c.eps,width=2.6in}
  \end{tabular}}
 \caption{\small The comparison of numerical results $c_n$, $c_p$, $\phi$ of the modified PNP system
 to those of original PNP system. Initial data (top row), charge densities (middle row), and the electrostatic potential (bottom row). The left panel is for the numerical results with the electrostatic potential boundary condition, $\phi(-1)=0.05$, $\phi(1)=0.0$, and the right one for the numerical results with $\phi(-1)=0.0$, $\phi(1)=0.05$.}\label{fig:F_1}
\end{figure}

\begin{figure}[hbt]
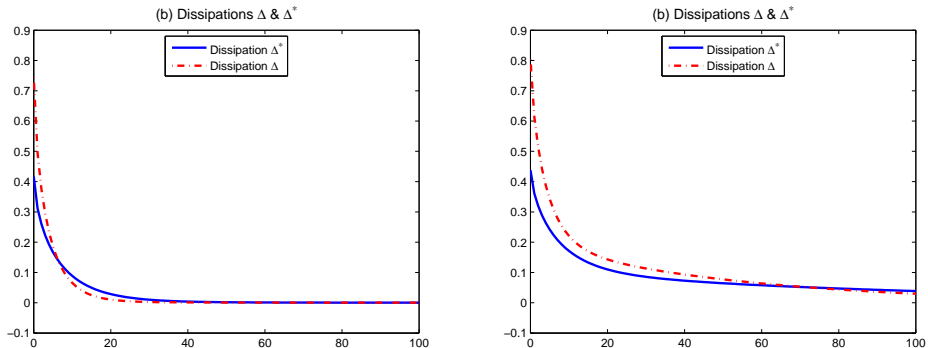

 \centering{%
  \begin{tabular}{@{\hspace{-0pc}}c@{\hspace{-0pc}}c}
   \psfig{figure=2_L_b.eps,width=2.6in} & \psfig{figure=2_R_b_100.eps,width=2.6in}
  \end{tabular}}
 \caption{\small The comparison of the dissipation
 of the modified PNP system to those of the original PNP system.
 The left panel is for the numerical results with the electrostatic potential boundary condition, $\phi(-1)=0.05$, $\phi(1)=0.0$, and the right one for the numerical results with $\phi(-1)=0.0$, $\phi(1)=0.05$.}\label{fig:F_2}
\end{figure}

\section{Conclusion}\
By employing an energetic variational approach, we derive a modified PNP system to describe the dynamics of non-ideal ions, such as those with relatively high concentrations.
In this work, we maintain the  energy functional as the original PNP system but modify dissipation functional with an additional dissipation term,
which accounts for  the relative velocity fields of different ion species. The modified PNP system is highly coupled and may even involve degenerate parabolicity in the system.
The analysis and simulation of such a system become much more involved than the original PNP system. As one preliminary step, we develop (with rigorous proof) the local existence theorem of this modified PNP system. By comparing the numerical results of the modified PNP system and the original PNP system, we verify that these two systems have the same equilibrium states but with different dynamics because of different dissipations. In the following up work, we are including modifications to
both free energy functional and the dissipation functional, and study the resulting PNP-type system theoretically and numerically.

\section{Acknowledgment}\
Chia-Yu Hsieh wishes to express sincere thanks to the Department of Mathematics of Pennsylvania State University for the chance of one-year visit.
YunKyong Hyon is partially supported by the National Institute for Mathematical Sciences (NIMS) grant
funded by the Korea government (No. B21401).
Tai-Chia Lin is partially supported by the National Science Council of Taiwan grants NSC-102-2115-M-002-015 and NSC-100-2115-M-002-007.
Chun Liu is partially support by the NSF grants DMS-1109107, DMS-1216938, and DMS-1159937.




\begin{thebibliography}{10}

\bibitem{AbMa78} {\sc R. Abraham and J.E. Marsden}, {\em Foundations of Mechanics, Second Edition}, Addison-Wesley, 1978.

\bibitem{Ar89} {\sc V.I. Arnold}, {\em Mathematical Methods of Classical Mechanics, Second
Edition}, Springer-Verlag, New York, 1989.

\bibitem{B1992} {\sc P. Biler}, {\em Existence and Asymptotics of Solutions for a Parabolic-Elliptic System with Nonlinear No-Flux Boundary Conditions}, Nonlinear Analysis, 19(21):1121--1136, 1992.

\bibitem{BHN1994} {\sc P. Biler and W. Hebisch and T. Nadzieja}, {\em The Debye System: Existence and Large Time Behavior of Solutions}, Nonlinear Analysis, 23(9):1189--1209, 1994.

\bibitem{DSZ_pp2008} {\sc M.A. Dorf, V.E. Semenov, and V.G. Zorin}, {\em A fluid model for ion heating due to ionization in a plasma flow}, Phys. of Plasmas, 15, 093501 (1-6), 2008.

\bibitem{EHL10} {\sc B. Eisenberg, Y. Hyon, and Chun Liu}, {\em Energy Variational Analysis of Ions in Water and Channels: Field Theory for Primitive Models of Complex Ionic Fluids}, J. Chem. Phys., 133(10), 104104, 2010.

\bibitem{Er1} B. Eisenberg, {\em Mass action in ionic
solutions}, Chemical Physics Letters, 511, 1-6, 2011.

\bibitem{Er11} B. Eisenberg, {\em Crowded Charges in Ion Channels, Advances in Chemical Physics}, John Wiley and Sons, Inc., 77-223, 2011.

\bibitem{E1} B. Eisenberg, {\em A Leading Role for Mathematics in the Study of Ionic Solutions}, SIAM News, 45, 11-12,
\bibitem{G-zm1985} H. Gajewski, {\em On Existence, Uniqueness and Asymptotic Behavior of Solutions of the
Basic Equations for Carrier Transport in Semiconductors}, Z. Angow. Math. Mech. 66 (1985) 2, 101-108.

\bibitem{GT1983} {\sc D. Gilbarg and N.S. Trudinger},
{\em Elliptic Partial Differential Equations of Second Order, Second
Edition}, Springer-Verlag, Berlin, 1983.


\bibitem{HLLE} T.L. Horng, T.C. Lin, C. Liu and B. Eisenberg,
{\em PNP equations with steric effects: a Model of Ion Flow through Channels}, J. Phys Chem B, 116(37), 11422-11441, 2012.

\bibitem{HEL} Y. Hyon, D. Y. Kwak and C. Liu,
{\em A Mathematical model for the hard sphere repulsion in ionic solutions},
Commun. Math. Sci., 9(2), 459--475, 2011.

\bibitem{HDL} Y. Hyon, B. Eisenberg and C. Liu,
{\em Energetic Variational Approach in Complex Fluids : Maximum Dissipation Principle},
DCDS-A, Vol. 26, No. 4, pp.1291--1304, 2010.

\bibitem{Ku76} {\sc R. Kubo}, {\em Thermodynamics: An Advanced Course with Problems and Solutions}, North-Holland Pub. Co., 1976.

\bibitem{LSU1968} {\sc O.A. Lady\v{z}enskaja and V.A. Solonnikov and N.N. Ural'ceva}, {\em Linear and Quasi-linear Equations of Parabolic Type}, Amer. Math. Society, Providence, 1968.

\bibitem{LHLL} {\sc C.C. Lee, H. Lee, Y. Hyon, T. C. Lin and C. Liu},
{\em New Poisson-Boltzmann Type Equations: One-Dimensional Solutions}, Nonlinearity 24 (2011) 431--458.

\bibitem{LE-cms13} T.C. Lin and B. Eisenberg, {\em A new approach to the Lennard-Jones potential and a new model: PNP-steric equations}, Comm. Math. Sci., Vol. 12, No. 1 (2014) 149-173.

\bibitem{On31} {\sc L. Onsager}, {\em Reciprocal Relations in Irreversible Processes. I.}, Phys. Rev., II. Ser., 37, 405--426, 1931.

\bibitem{On31a} {\sc L. Onsager}, {\em Reciprocal Relations in Irreversible Processes. II.}, Phys. Rev., II. Ser., 38, 2265--2279, 1931.

\bibitem{R1} {\sc R. Ryham}, {\em An Energetic Variational Approach To Mathematical Modeling Of Charged Fluids: Charge Phases, Simulation And Well Posedness}, thesis, Pennsylvania State University, 2006.

\bibitem{RLW} {\sc R. Ryham, C. Liu, and Z.Q. Wang}, {\em On Electro-Kinetic Fluids: One Dimensional Configurations}, Discrete Contin. Dyn. Syst. Ser. B 6 (2006), no.~2, p.~357--371.

\bibitem{RLZ} {\sc R. Ryham, C. Liu, and L. Zikatanov}, {\em An Mathematical Models for the Deformation of Electrolyte Droplets},
Discrete Contin. Dyn. Syst. Ser. B 8 (2007), no. 3, p. 649--661.

\bibitem{S-na1985} T. I. Seidman, {\em TIME-DEPENDENT SOLUTIONS OF A NONLINEAR SYSTEM ARISING IN SEMICONDUCTOR THEORY}, Nonlinear Analysis, Theory, Merhods Applications. Vol. 9, No. 11, pp.~1137-1157, 1985.

\bibitem{Ra73} {\sc J.W. Strutt}, {\em Some General Theorems Relating to Vibrations}, Proc. of L.M.S. IV, 357--368, 1873.

\bibitem{TK93} {\sc R. Taylor and R. Krishna}, {\em Multicomponent Mass Transfer}, Wiley, 1993.

\bibitem{XSL14} {\sc S. Xu, P. Sheng, and C. Liu}, {\em An Energetic Variational Approach for Ion Transport},
Commun. Math. Sci., Vol. 12, No. 4, Pages 779--789, (2014).

\bibitem{XuZi99} {\sc J. Xu, L. Zikatanov}, {\em A Monotone Finite Element Scheme for Convection-diffusion Equations},
Mathematics of Computation, Vol. 68, No. 228, Pages 1429--1446, (1999).



\end{thebibliography}
\end{document}